\newtheorem{lemma}{Lemma}[section]
\newtheorem{definition}{Definition}[section]
\newtheorem{theorem}{Theorem}[section]
\newtheorem{proposition}{Proposition}[section]
\newtheorem{remark}{Remark}[section]
\newtheorem{assumption}{Assumption}
\newtheorem*{theorem*}{Theorem}
\newtheorem*{proposition*}{Proposition}
\newcommand{\E}{\mathbb{E}}
\title{Market impact as anticipation of the order flow imbalance}
\author{Thibault Jaisson\thanks{I thank my PhD supervisors Emmanuel Bacry and Mathieu Rosenbaum for their help in the realization of this work.}\\ CMAP, \'Ecole Polytechnique Paris\\ thibault.jaisson@polytechnique.edu}
\begin{document}

\maketitle

\begin{abstract}
\noindent
In this paper, we assume that the permanent market impact of metaorders is linear and that the price is a martingale. Those two hypotheses enable us to derive the evolution of the price from the dynamics of the flow of market orders. For example, if the market order flow is assumed to follow a nearly unstable Hawkes process, we retrieve the apparent long memory of the flow together with a power law impact function which is consistent with the celebrated square root law. We also link the long memory exponent of the sign of market orders with the impact function exponent. One of the originalities of our approach is that our results are derived without assuming that market participants are able to detect the beginning of metaorders.
\end{abstract}

\noindent
\textbf{Keywords:}
Market impact,
metaorder,
market order flow,
Hawkes processes,
long memory,
square root law,
market efficiency.

\section{Introduction}

Loosely speaking, the \textit{market impact} is the link between the volume of an order (either market order or metaorder\footnote{A metaorder is a large transaction executed incrementally, see \cite{efficiencyimpact}.}) and the price moves during and after the execution of this order, see \cite{priceimpact}. Accurate modelling of market impact is of paramount importance. Practically, market impact greatly affects brokers, market makers but also large investors in the design of their optimal strategies. For example, market impact tends to decrease the profits of investment strategies by creating an execution cost that increases with the volume of the transaction. From a theoretical point of view, it is argued in \cite{hmsdcsd} and \cite{shiller} that endogenous (that is independent of information outside of the market) changes in supply and demand have more influence on price fluctuations than exogenous information. Therefore, understanding the price formation process necessarily goes through understanding market impact.\\

\noindent
Here we focus on metaorders. More precisely, we study the impact function of metaorders, which is the expectation of the price move with respect to time during and after the execution of the metaorder. We call permanent market impact of a metaorder the limit in time of the impact function (that is the average price move between the start of the metaorder and a long time after its execution). The literature on this topic is often rooted to the seminal work of Kyle \cite{kyle}. In the latter, it is shown that an informed trader who has private information on the future of the price should incrementally execute his metaorder, thereby slowly revealing the price to the rest of the market. However, it is also deduced that the market impact should be linear throughout the execution of the metaorder, which does not agree with empirical studies that consistently give a strictly concave market impact. We refer for example to \cite{bershova2013non} and \cite{moroimpact} where the estimated impact function is close to a power law with respect to time throughout the execution, with exponent being between 0.5, see \cite{bershova2013non}, and 0.7, see \cite{moroimpact}. These empirical results are often referred to as the square root law.\\

\noindent
The first main theoretical explanations of the concavity of the impact function connect it to the behaviour of large investors, see \cite{gabaix2006institutional}, the persistence in the order flow, see \cite{propagator} and \cite{lillolm}, or the size distribution of metaorders, see \cite{efficiencyimpact}.
In \cite{tothanomalous}, Toth \textit{et al.} show that the diffusivity of the price can be linked to the V-shape of the latent order book. They propose a model in which they numerically show that the concavity of the impact function depends on the long memory of the order flow. Recently, Farmer \textit{et al.} \cite{efficiencyimpact} have managed to relate this concavity to the distribution of the sizes of metaorders. More precisely, in their model, they get a power law impact function whose exponent $\nu$ is linked to the exponent of the distribution tail of the size of metaorders $\beta$ by $\nu=\beta-1$, which quite fits empirical data. To obtain this, they use two conditions: A martingale hypothesis on the price, and a fair pricing condition, which is derived from a Nash equilibrium between investors and states that the average ex post gain of a metaorder of any size should be zero. Let us remark that in \cite{donier}, similar conditions are obtained using only perfect competition between market makers.\\

\noindent
In the works mentioned above, it is assumed that market makers can precisely detect when a metaorder is being executed. Although reasonable for large metaorders, it is a strong postulate, especially at the beginning of the execution and for small metaorders. Here, we try to build a theory which does not require such an assumption. Indeed, market makers only see a flow of market orders and should thus set their prices according to this flow, and not to the underlying flow of metaorders.\\

\noindent
It is well known, see \cite{lillolm}, that the sign (buy or sell) of market orders presents persistence (more precisely, the correlation function of the sign process behaves as a power law with exponent lower than one). This persistence can be explained by the fact that the market order flow reflects the partition of metaorders into sequences of same-sided market orders. Indeed, it is shown in \cite{fragmentation} that an order flow obtained from the fragmentation of metaorders exhibits long memory. In \cite{fragvsherd}, it is empirically proved that the persistence in the sign process mostly comes from splitting (and not herding).
\\

\noindent
In \cite{propagator}, Bouchaud \textit{et al.} propose a way to obtain a diffusive price process from a long memory order flow modelled as a discrete time FARIMA process (see \cite{beranlm} for definition). To do so, they consider that the impact of a market order decreases in time as a power law (where the exponent of the power law is linked to the exponent of the long memory). However, in this work, the impact is not related to the notion of metaorder. In particular, if one executes a metaorder in this model, its permanent impact is null, see \cite{hmsdcsd}. Moreover, while the permanent impact of metaorders is not easy to compute, it does not seem to be zero in practice, see \cite{moroimpact}.
Our aim is to somehow place ourselves between \cite{propagator} and \cite{efficiencyimpact}. We assume that the permanent impact of metaorders does exist with a specific (linear) form derived from no price manipulation arguments. Furthermore, we consider that market makers only see the flow of market orders that they essentially understand as a superposition of fragmented metaorders.\\

\noindent
There are two manners to understand why metaorders impact the price. In most papers, see for example \cite{efficiencyimpact} and \cite{kyle}, market impact is viewed as a way to pass on private information to the price. In these models, large investors react to information signals on the future expectation of the price using metaorders. In such approaches, metaorders reveal ``fundamental" price moves but do not really cause them. In particular, if a metaorder is executed for no reason, it does not have any long term impact on the price. Here, we use the other vision of market impact. We assume that it is mechanical in the sense that a metaorder moves the price through its volume, by creating a long term imbalance in supply and demand, independently of the informativeness of the metaorder\footnote{In Section \ref{im}, we illustrate this vision of market impact with a simple investor model.}. Choosing between these two paradigms requires brokerage data where the client accepts to say whether he is trading because he has directional views on the price or for another reason (risk management, hedging, regulatory constraints,...). This is done in \cite{waelbroeck2013market} where it is shown that after a few days, the impact of ``informed trades" is larger than that of ``cash flow trades" (which even tends to be null). However, at intraday time scales, which are of interest here, the two previous impact functions are very close.\\

\noindent
In some papers, see for example \cite{gatheralna} and \cite{manip}, the mechanical vision of market impact is implicitly assumed to derive important results. In these two articles, the authors derive conditions in their models so that a round trip\footnote{A round trip is defined in \cite{manip} as a trading strategy which starts and ends with a null inventory.} is not profitable on average. In \cite{manip}, it is applied to the Almgren-Chriss optimal liquidation framework, see \cite{AC}, and the authors obtain that the permanent impact of a trade must be linear in its size. In \cite{gatheralna}, this hypothesis is considered in a more general model and once again, it is obtained that if there is permanent impact, it has to be linear. However, these reasonings only hold if the impact of a trade does not depend on the reason investors are executing their trades, therefore, under the hypothesis that the impact is mechanical.\\

\noindent
In this work, we make two natural hypotheses: The linearity of the permanent market impact of a metaorder (which arises when impact is mechanical and there is no price manipulation but can also hold if impact is informational, see Section 2), and a martingale assumption on the price. These two hypotheses give us a simple and general relationship between the dynamics of the order flow and the evolution of the price (Equation \eqref{impactvolume}). Let us emphasize that this relationship is not between the price and the flow of specific metaorders which are not observable by market makers. We then apply this formula to the example of an order flow modelled by a nearly unstable Hawkes process.\\

\noindent
Hawkes processes, introduced in \cite{hawkes1}, have been successfully applied to seismology, neurophysiology, epidemiology, and reliability and are nowadays very popular in finance. Among their recent applications in this field, let us cite the studies of tick by tick prices in \cite{bacryhawkes1}, trades in \cite{bowsher2007modelling}, order books in \cite{large2007measuring}, financial contagion in \cite{ait2010modelling} and credit risk in \cite{errais2010affine}. Using Hawkes processes to model the flow of market orders in continuous time is natural. Indeed, it is a simple way to introduce clustering in the order flow and to reproduce the empirical correlation functions of buy and sell orders, see \cite{bacryhawkesesti}.\\

\noindent
We will see that we recover many well known stylized facts of market impact under the Hawkes assumption. In particular, at the scale of individual market orders, we will compute a continuous time propagator model which generalizes the discrete time propagator proposed in \cite{propagator}. In such models, the impact of a market order on the price does not depend on the past of the market order flow but is decreasing in time. At the scale of metaorders, we will compute a power law impact function, where the exponent of the power law $\nu$ is linked to the exponent of the long memory of the order flow $\gamma$ by the relation: $\gamma=2\nu-1$. We thus somehow recover the square root law.\\

\noindent
The paper is organized as follows. In Section \ref{agentmodel}, we explain and justify the use of the linear permanent impact assumption and we compute our general impact equation. In Section \ref{example}, we apply this equation to the example of an order flow modelled by nearly unstable Hawkes processes. We derive the price process implied by such an order flow and show that the impact function of a metaorder is asymptotically a power law with exponent lower than one. We conclude in Section \ref{conclusion}.

\section{Computing price moves from the order flow}
\label{agentmodel}

The aim of this section is to present our assumptions and to see how they imply our impact equation that links the dynamics of the price to that of the order flow.

\subsection{Linear permanent impact under the mechanical impact assumption}
\noindent
In this section, we explain why it is reasonable to assume that the permanent market impact ($PMI$ for short) of a metaorder is linear in its volume under the mechanical vision of market impact.\\

\noindent
During and after the execution of a metaorder of volume $V$, the price process is impacted. Indeed, if one buys a large quantity of stock, the price moves upward on average. We assume that a long time after the execution of the metaorder (compared to the time it takes to execute it), the market somehow returns to some kind of stationary state except that the price has on average moved by a value that only depends on $V$. The permanent market impact, is by definition this average price move between the start of the execution of the order and a long time after the end of this execution:

\begin{center}
$PMI(V)=\displaystyle\lim_{s\rightarrow+\infty}\E[P_s-P_0|V],$
\end{center}
where $s=0$ corresponds to the beginning of the metaorder and $P_s$ is the mid price process. We assume that the impact of metaorders is large enough so that we can neglect the spread and thus the mid, ask and bid prices are close.\\

\noindent
In order to measure this impact, one thus has to wait for the market to absorb the volume of the order and to get to its new equilibrium.

\begin{remark}
The notion of expectation is in fact not defined outside of a model. Here, we make the assumption (that is implicit in most works on market impact) that the dynamics of the market satisfy stationarity conditions which allow us to compute expectations as empirical averages.
\end{remark}

\begin{remark}
The need to wait a long time (metaorders can last days), the variations of the market state (market impact depends on the stock, volatility, market activity,...) as well as the relative small number of recorded metaorders (in \cite{moroimpact}, they detect around 100 000 metaorders per year on 74 stocks traded on the LSE, in \cite{bershova2013non} they use 12 500 metaorders) make it very difficult to have robust empirical laws on the permanent impact. For example, in \cite{bershova2013non}, the empirical permanent impact of a metaorder with respect to its duration is derived by averaging over metaorders on many stocks and market states (time periods) which might seem arguable. We can however say that the permanent market impact seems different from zero.
\end{remark}

\noindent
We place ourselves in the mechanical impact paradigm explained in the introduction and give two simple possible reasons why this impact should be linear in the volume of the metaorder:

\begin{center}
$PMI(V)=kV.$
\end{center}

\noindent
The first one is functional. Indeed, if this impact does not depend on the way the investor executes his order, then the two following executions should lead to the same permanent impact:

\noindent
i) Executing a volume $V$.

\noindent
ii) Executing a volume $\rho V$, waiting a long time (for the market to return to its stationary state) and executing $(1-\rho)V$ ($\rho\in[0,1]$).

\noindent Therefore,
\begin{center}
$PMI(V)=PMI(\rho V)+PMI((1-\rho)V),$
\end{center}
which implies that $PMI$ must be linear.

\begin{remark}
This reasoning does not apply to the temporary (during the execution) part of the impact function which may depend on the way the metaorder is executed. Indeed, in ii), one has to wait a long time after the first part of the order has been executed otherwise the market would not be in its stationary state for the second part.
\end{remark}

\noindent
The second reason is based on ``no price manipulation" arguments. In \cite{manip}, it is shown that in the framework of Almgren and Chriss \cite{AC}, where the impact is decomposed into execution costs and permanent market impact, the absence of price manipulations implies that the permanent market impact must be linear.\\

\noindent
In \cite{gatheralna}, these arguments are generalized to a more realistic time dependent impact model. The model states that if a trader executes an order at a trading rate $\stackrel{.}{x}_s$ at time $s$ (assuming that the volume $x_t$ executed up to time $t$ is differentiable), then he impacts the price $S$ in $t$ by: $\int_0^t f(\stackrel{.}{x}_s)G(t-s)ds$. Therefore, if without the trader, the price is a Brownian motion with volatility $\sigma$, the price satisfies:

\begin{equation*}
S_t=S_0+\int_0^t f(\stackrel{.}{x}_s)G(t-s)ds+\int_0^t \sigma dB_s,
\end{equation*}
where $f$ is the impact function and $G$ is the decay kernel, which represents the decrease of the impact of a trade through time.\\

\noindent
In this model, using the same kind of arguments as in \cite{gatheralna}, we can easily show that if there is permanent impact, then this impact is linear in volume. More precisely, we have the following proposition:

\begin{proposition}
\label{propogatheral}
If there is no price manipulation in the sense of \cite{gatheralna}\footnote{In \cite{gatheralna}, a price manipulation is a round trip whose average cost is negative.} and there is permanent impact, that is $G(s)\underset{s\rightarrow +\infty}{\rightarrow} G_\infty>0$, then $f$ is linear.
\end{proposition}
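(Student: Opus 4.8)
The plan is to exploit the no-price-manipulation condition by constructing an explicit family of round trips and forcing their average cost to be nonnegative, then to extract from this the additivity (Cauchy-type) property that pins down $f$ as linear. The key observation is that a price manipulation is characterized by a negative average cost of a round trip, and since we may run a strategy and its time-reversal (or sign-reversal), the only way to forbid manipulation in both directions is to make certain cost expressions vanish, i.e.\ to turn inequalities into equalities.

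First I would write down the expected cost $\mathcal{C}$ of a round trip associated with a trading rate $\dot{x}_s$ on $[0,T]$ with $x_0=x_T=0$ (null initial and terminal inventory). Using the model $S_t=S_0+\int_0^t f(\dot{x}_s)G(t-s)\,ds+\int_0^t \sigma\,dB_s$, the expected cost of liquidating against one's own impact is $\E\!\int_0^T S_t\,\dot{x}_t\,dt$, and since the Brownian part integrates to zero in expectation (it is a martingale independent of the deterministic schedule, and the inventory returns to zero), the cost reduces to the deterministic double integral $\int_0^T\!\int_0^t f(\dot{x}_s)\,G(t-s)\,\dot{x}_t\,ds\,dt$. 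The no-manipulation hypothesis says this is nonnegative for every admissible round trip.

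Next I would specialize to simple \emph{block} schedules that realize the functional argument already given heuristically in the text: trade at a constant rate $a$ over a short interval, wait a long time so that the decay kernel has essentially relaxed to its permanent level $G_\infty$, trade at rate $b$ over another short interval, and then liquidate. Because $G(s)\to G_\infty>0$, in the long-waiting limit the cross terms between well-separated blocks are governed purely by $G_\infty$, while the ``self-interaction'' of each block and the transient part of $G$ contribute an isolated cost for each piece. Comparing the round trip that executes a net volume in one block against the one that splits it into two widely-separated blocks (and then unwinds), and using that \emph{both} the strategy and its reversal must have nonnegative cost, I expect the transient contributions to cancel and the permanent-impact contributions to force the relation $f(a+b)=f(a)+f(b)$ in the appropriate scaling limit (equivalently $f$ satisfies Cauchy's functional equation). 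Under the natural regularity/monotonicity assumed for an impact function, this additivity yields $f(v)=k v$, i.e.\ $f$ is linear.

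The main obstacle will be making the ``wait a long time'' limit rigorous: one must show that the transient part $G(\cdot)-G_\infty$ contributes a term that becomes independent of the separation and cancels between the two competing round trips, so that only the $G_\infty$-weighted permanent part survives to constrain $f$. This requires controlling the double integral as the gap between blocks tends to infinity and verifying that the self-cost of each block is identical in both strategies (so it drops out of the comparison). Once the permanent part is isolated, the manipulation-free condition applied symmetrically (to the round trip and its negative) converts the resulting inequality into the exact additivity equation, after which linearity is immediate. I would keep the block schedules piecewise constant precisely to avoid grinding through the kernel convolution in full generality and to make the cancellation transparent.
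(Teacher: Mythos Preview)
Your overall strategy---construct explicit round trips, let a time parameter grow so that only the permanent level $G_\infty$ of the kernel survives, and then use the no-manipulation inequality together with a symmetry to upgrade it to an equality---is exactly the paper's idea. Where you diverge is in the functional equation you are aiming for, and that is where the plan becomes shaky.

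You want to compare ``execute a volume in one block'' with ``split it into two widely separated blocks'' and extract Cauchy additivity $f(a+b)=f(a)+f(b)$. But in this model $f$ is a function of the \emph{trading rate}, not of the executed volume. Splitting a volume $V$ into two pieces executed at rates $a$ and $b$ does not produce the rate $a+b$ anywhere, so there is no place in the cost expression where $f(a+b)$ appears to be compared with $f(a)+f(b)$. The additivity heuristic $PMI(V)=PMI(\rho V)+PMI((1-\rho)V)$ earlier in the paper concerns the permanent impact as a function of volume and does not transfer to $f$; your three-block scheme (buy at $a$, wait, buy at $b$, liquidate) therefore does not isolate the relation you want.

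The paper's argument is shorter and sidesteps this. It uses a single \emph{two}-block round trip with no waiting: buy at constant rate $v_1$ on $[0,\theta T]$ and sell at constant rate $v_2$ on $[\theta T,T]$, with $\theta=v_2/(v_1+v_2)$ to close the position. Writing $G=G_\infty+\tilde G$ and letting $T\to\infty$, a Ces\`aro argument (applied to $\tilde G\to 0$) kills all the transient contributions, and the normalised cost $E(v_1+v_2)^2/(T^2v_1v_2)$ converges to $\tfrac12 G_\infty\big(f(v_1)v_2-f(v_2)v_1\big)$. Nonnegativity of $E$ forces $v_1 f(v_2)\le v_2 f(v_1)$. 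The ``reversal'' that yields the opposite inequality is simply swapping the roles of $v_1$ and $v_2$ (buy at $v_2$, then sell at $v_1$), not a sign- or time-reversal. Hence $f(v_1)/v_1=f(v_2)/v_2$ for all $v_1,v_2>0$, i.e.\ $f$ is linear---with no appeal to Cauchy's equation and no extra regularity needed.
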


\noindent
The proof is given in appendix.

\begin{remark}
In the previous model, the cumulated executed volume of the trader $x_t=\int_0^t \stackrel{.}{x}_s ds$ is assumed to be differentiable which is not the case in practice. Indeed, the volume is executed by individual market orders. However, if $f(v)=kv$ as above, the model can be rewritten:
\begin{equation*}
S_t=S_0+k\int_0^t G(t-s)dx_s+\int_0^t \sigma dB_s,
\end{equation*}
and generalised to real order flows. The permanent impact of a metaorder of volume $V$ is thus linear in its volume:$$PMI(V)=G_\infty k V.$$

\end{remark}

\noindent
Therefore, in a rather general model, the permanent impact of a metaorder has to be linear. However, let us point out that the preceding arguments assume that even traders who execute metaorders without information have a permanent impact on the price. Therefore, the impact has to be mechanical as explained in the introduction. This could seem unrealistic but in fact can be understood in terms of the aversion that asset managers have to holding large positions of an asset.
In the next paragraph, we present a simple investor model which explains how even uninformed metaorders can impact an indifference price.

\subsection{A toy investor model}
\label{im}
A classical view is to consider that investors who use metaorders have a better information about the future of the price than other market participants, see for example \cite{kyle}. However, this is based on the economical concept of an ``informational price" whereas empirical studies seem to show that the main drive of price fluctuations is not information, see \cite{shiller}, but volume, see \cite{hmsdcsd}. Indeed, in \cite{shiller}, Shiller argues that: ``The movements in stock price indexes could not realistically be attributed to any objective new information, since movements in the indexes are ``too big" relative to actual subsequent-movements in dividends".\\

\noindent
In this section, we consider a simple example of investor model, which is in fact a particular case of the CAPM model (see for example \cite{sharpe1964capital}). This model enables us to illustrate how an uninformed metaorder can have a long term mechanical impact on the price. This impact is due to the risk aversion of asset managers (that is their unwillingness to hold large positions of one asset) and not to information asymmetry.\\

\noindent
We assume that there are $n$ investors in our market.
There is a fixed number $N$ of shares that are spread between the agents and let us call $P$ the price.
Every investor $(I_i)_{i=1...n}$ estimates that the distribution of the yield of a share has an average $E_i$ and a variance $\Sigma_i$ and chooses the number of shares $N_i$ in his portfolio optimizing a mean variance criterion with risk aversion $\lambda_i$:
\begin{eqnarray*}
N_i &=& \text{argmax}_x \{ x(E_i-P)-\lambda_i x^2\Sigma_i\}\\
&=&\frac{E_i-P}{2\lambda_i \Sigma_i}.
\end{eqnarray*}

\noindent
Moreover,
$$\sum_{i=1}^n N_i=N,$$
which gives the indifference price $P$ of investors:
$$P=\frac{\displaystyle\sum_{i=1}^n\frac{E_i}{2\lambda_i \Sigma_i}-N}{\displaystyle\sum_{i=1}^n\frac{1}{2\lambda_i \Sigma_i}}.$$

\noindent
Let us now assume that the total number of shares becomes $N-N_0$ due to the action of some non optimizing agent needing to buy some shares (for cash flow reasons for example).
The new indifference price is
$$P^+=P+\frac{N_0}{\displaystyle\sum_{i=1}^n\frac{1}{2\lambda_i \Sigma_i}}=P+kN_0.$$

\noindent
This means that in this framework, a metaorder has an impact on the indifference price that is linear in its size. We thus have illustrated (in a quite naive way admittedly!) how the impact of a trade might not depend on its informativeness.

\begin{remark}
Let us stress again that in contrast to \cite{efficiencyimpact} and \cite{kyle}, in our framework, the impact of an order is purely due to its volume and not to any exogenous information. Therefore, all traders are equivalent and there are no noise traders. In particular, individual trades should be considered as small metaorders.
\end{remark}

\subsection{Impact dynamics}

We have shown that it is reasonable to consider that every metaorder executed by a market participant has a long term impact on the price that is linear in its size. Here, we moreover assume a martingale property for the price. We use these assumptions to get a general impact equation linking the order flow dynamics and the price process.\\

\noindent
Let us denote by $[0,S]$ the time during which metaorders are being executed (which can be thought of as the trading day) and by $V^a_t$ and $V^b_t$ the cumulated volumes of market orders at the ask and at the bid up to a time $t\in[0,S]$. We consider the following assumption.

\begin{assumption}
\label{assu1}
Metaorders are executed only via market orders and market orders are only used to execute metaorders. Therefore, the daily cumulated volume of buy (resp. sell) market orders $V^{a}_S$ (resp. $V^{b}_S$) is equal to the sum of the volume of the $N_S^a$ buy (resp. $N_S^b$ sell) metaorders of the day
\begin{equation*}
V^{a}_S=\sum_{i=1}^{N_S^a} v^a_i,
\end{equation*}
where $v^a_i$ is the volume of the $i^{th}$ buy metaorder of the day.
\end{assumption}

\begin{remark}

Assumption \ref{assu1} is used for simplicity and can in fact be weakened to Assumption $1$ of \cite{mminvariants} that states that the daily cumulated volume of buy (resp. sell) market orders  is proportional (and not necessarily equal) to the sum of the volume of the $N_S^a$ buy (resp. $N_S^b$ sell) metaorders of the day:
\begin{equation*}
V^{a}_S=\frac{\chi}{2}\sum_{i=1}^{N_S^a} v^a_i,
\end{equation*}
where  $\chi$ can be understood as \textit{``the level of intermediation in the market"}. Hereinafter, we will take $\chi=2$.
\end{remark}

\noindent
To fix ideas, we assume that at the beginning and at the end of the day, there are auctions whose prices are denoted by $P_0^-$ and $P_S^+$.
Then our assumption on the linearity of the permanent market impact goes as follows.

\begin{assumption}
\label{assu2}
$P_S^+$ is on average equal to the price of the opening auction $P_0^-$ plus the impact of the metaorders of the day which is taken linear in their volume:
\begin{eqnarray*}
P_S^+&=&P_0^-+\kappa (\sum_{i=1}^{N_S^a} v^a_i-\sum_{i=1}^{N_S^b} v^b_i)+Z_S\\
&=&P_0^-+\kappa (V^a_S-V^b_S)+Z_S,\\
\end{eqnarray*}
where $\kappa$ is a positive constant and $Z$ is a martingale random walk.
\end{assumption}

\noindent
Finally, we consider the following assumption.

\begin{assumption}
\label{assu3}
The price $P$ is a martingale.
\end{assumption}

\noindent
Assumption \ref{assu3} implies that
\begin{eqnarray*}
P_t&=&\E[P_S^+|\mathcal{F}_t]\\
&=&P_0^-+\E[\kappa(V^a_S-V^b_S)|\mathcal{F}_t]+\E[Z_S|\mathcal{F}_t],
\end{eqnarray*}
where $\mathcal{F}_t$ is the information set of market participants at time $t$.\\

\noindent
The noise term $\E[Z_S|\mathcal{F}_t]$ corresponds to price moves that are caused by other factors than volume (for example public announcements). For the sake of simplicity, we consider that $Z_S=0$ and thus

\begin{center}
$P_t=P_0^-+\E[\kappa(V^a_S-V^b_S)|\mathcal{F}_t].$
\end{center}

\noindent
Let us now extend the buy and sell market order flow processes to $\mathbb{R_+}$ (until now they were only defined on a trading day $[0,S]$) and assume that these processes satisfy the following technical assumption.

\begin{assumption}
\label{assu4}
The quantity $\E[V^a_s-V^b_s|\mathcal{F}_t]$ converges to some finite limit when $s$ tends to infinity.
\end{assumption}

\noindent
This assumption can be understood in this way: $\mathcal{F}_t$ does not provide any information about the order flow imbalance which will occur at time $t'\gg t$. Indeed, under this assumption, for all $h>0$,

\begin{center}
$\E[(V^a_{s+h}-V^b_{s+h})-(V^a_s-V^b_s)|\mathcal{F}_t]\underset{s\rightarrow +\infty}{\rightarrow} 0$.
\end{center}

\noindent
Therefore, the order flow imbalance between $s$ and $s+h$ is asymptotically (in $s$) not predictable at time $t$.
For example, in the next section, we will assume that the market order flows are Hawkes processes for which this assumption is satisfied.
Under this hypothesis, we have that if $S$ is large enough compared to the characteristic convergence scale of the function $s\mapsto \E[V^a_s-V^b_s|\mathcal{F}_t]$, when computing the price $P_t$, $S$ can be seen as infinity. This implies the main result of this paper.

\begin{theorem}
Under Assumptions \ref{assu1}, \ref{assu2}, \ref{assu3} and \ref{assu4}

\begin{equation}
\label{impactvolume}
P_t=P_0+\kappa \displaystyle\lim_{s\rightarrow+\infty}\E[V^a_s-V^b_s|\mathcal{F}_t].
\end{equation}

\end{theorem}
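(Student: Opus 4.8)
The plan is to chain the three structural assumptions together in order, starting from the martingale property and finishing with the convergence hypothesis; the computation is essentially the one narrated just before the statement, reorganized into a clean sequence of substitutions. First I would invoke Assumption~\ref{assu3}, that $P$ is a martingale, to write
\[
P_t = \E[P_S^+ \mid \mathcal{F}_t],
\]
expressing the price at time $t$ as the conditional expectation of the closing-auction price given the information $\mathcal{F}_t$.

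Next I would substitute the decomposition of $P_S^+$ supplied by Assumption~\ref{assu2}. Because $P_0^-$ is deterministic (hence $\mathcal{F}_t$-measurable) and conditional expectation is linear, this gives
\[
P_t = P_0^- + \kappa\,\E[V^a_S - V^b_S \mid \mathcal{F}_t] + \E[Z_S \mid \mathcal{F}_t].
\]
Setting $Z_S = 0$ as agreed removes the noise term and leaves $P_t = P_0^- + \kappa\,\E[V^a_S - V^b_S \mid \mathcal{F}_t]$. Assumption~\ref{assu1} enters implicitly at this point, since it is what guarantees that the observable order-flow imbalance $V^a_S - V^b_S$ coincides with the signed metaorder volume whose permanent impact is posited in Assumption~\ref{assu2}.

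The final step is to pass from the finite horizon $S$ to the limit. Here I would extend the flow processes to $\mathbb{R}_+$ and apply Assumption~\ref{assu4}, which asserts that $s \mapsto \E[V^a_s - V^b_s \mid \mathcal{F}_t]$ converges to a finite limit. Choosing the trading horizon $S$ large compared with the characteristic convergence scale of this map, I would identify $\E[V^a_S - V^b_S \mid \mathcal{F}_t]$ with $\lim_{s\to\infty}\E[V^a_s - V^b_s \mid \mathcal{F}_t]$, and writing $P_0 = P_0^-$ then yields the identity~\eqref{impactvolume}.

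I expect this last passage to be the only point requiring care. The theorem is stated as an exact equality, whereas replacing $S$ by $+\infty$ is genuinely asymptotic: the honest content is that the residual $\E[V^a_S - V^b_S \mid \mathcal{F}_t] - \lim_{s\to\infty}\E[V^a_s - V^b_s \mid \mathcal{F}_t]$ is negligible once $S$ lies far beyond the convergence scale guaranteed by Assumption~\ref{assu4}. The reading of that assumption given before the statement---that $\mathcal{F}_t$ carries no information about order-flow imbalance in the distant future---is exactly what forces this residual to vanish in the relevant regime, so no additional estimate is needed beyond invoking the assumption itself.
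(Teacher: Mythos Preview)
Your proposal is correct and follows essentially the same approach as the paper: the argument given there is precisely the chain martingale property $\Rightarrow$ substitute the Assumption~\ref{assu2} decomposition $\Rightarrow$ drop $Z_S$ $\Rightarrow$ use Assumption~\ref{assu4} to replace $S$ by $+\infty$. Your explicit remark that the final identification is asymptotic rather than an exact equality is accurate and in fact slightly more careful than the paper's own phrasing.
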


\begin{remark}
To fix ideas, we have assumed that metaorders stop at the end of the day $S$. However, some metaorders last days (or even weeks). Therefore, $S$ should thus not be considered as the end of the day but the end of a period that is very large compared to the execution of metaorders.
\end{remark}

\noindent
From Equation \eqref{impactvolume}, and any example of order flow dynamics, we can thus derive a price process which satisfies the following properties:

\begin{itemize}
\item{The price is a martingale even if the order flow exhibits persistence.}
\item{The permanent impact of a metaorder is linear in its size, independently of the execution.}
\item{The price process only depends on the global market order flow and not on the individual executions of metaorders. We thus do not need to assume that the market ``sees'' the execution of metaorders as it is usually done (see for example \cite{efficiencyimpact}).}
\end{itemize}

\noindent
Equation \eqref{impactvolume} can be seen as a rigorous and model independent generalization of the postulate ``the impact is proportional to the innovation in the order flow" of \cite{glomil} and \cite{mrr}. Indeed, in our model, market orders move the price because they change the anticipation that market makers have about the future of the order flow.
Furthermore, considering, as in Equation \eqref{impactvolume}, that price moves are due to the ``surprise in the order flow" (that is the variation of the expected cumulated order flow imbalance) is a general way to solve the following apparent ``paradox" (which is one of the challenges of impact models): Order flows are persistent and yet prices are martingales.\\

\noindent
From the empirical point of view, our framework is supported by \cite{hirschey2011high} where it is shown that the trades of some arbitragers (namely high frequency traders) anticipate the order flows of other investors. Therefore, arbitragers move prices through their trades until it is no longer worth doing so; that is until the price correctly anticipates the future order flow imbalance.\\

\noindent
In the next section, we give an example of realistic order flow dynamics and compute the associated price process. We will see that the linearity of the permanent impact of a metaorder is not incompatible with the observed concave power law impact functions.

\section{Application to the Hawkes order flow example}
\label{example}

We now want to apply Equation \eqref{impactvolume} to a reasonable example of market order flow dynamics. More precisely, we model the buy and sell market order flows as two independent (nearly unstable) stationary Hawkes processes which somehow reproduce the persistence of the sign of market orders. Applying Equation \eqref{impactvolume} to these processes, we derive a simple price model that is very similar to the propagator model of \cite{propagator}.

\subsection{Impact of individual market orders}

Before deriving the impact of individual market orders under the Hawkes order flow assumption, we need to introduce Hawkes processes and the propagator model.

\subsubsection{Hawkes processes}

Hawkes processes have been introduced in \cite{hawkes1}. They have recently been applied to many fields of finance, see the references in introduction. For example, they were used to study tick by tick prices in \cite{bacryhawkes1} and order flows in \cite{bacryhawkesorder}. We model the arrival times of buy and sell market orders as the jump times of two independent Hawkes processes.

\begin{remark}
We consider that all market orders have the same volume $v$.
\end{remark}

\noindent
By definition, a Hawkes process $N$ is a self exciting point process whose intensity $\lambda$ at time $t$ depends linearly on its past, see \cite{hawkes1}:
$$\lambda_t=\mu+\int_0^t\phi(t-s)dN_s,$$
where $\mu$ is a positive constant and $\phi$ is a positive function supported on $\mathbb{R}_+$ which satisfies the stability condition: $$\int_0^{+\infty} \phi(s)ds <1.$$

\noindent
A nice property of Hawkes processes that we will use in order to compute the price dynamics is the following, see \cite{bacryhawkesscaling}:

\begin{proposition}
\label{formlambdamart}
$\forall t\geq 0$:
$$\lambda_t=\mu+\mu \int_0^t \psi(t-s)  ds+\int_0^t \psi(t-s)dM_s,$$
where $\psi=\sum_{k=1}^{+\infty} \phi^{*k}$, $\phi^{*k}$ being the $k^{th}$ covolution product of $\phi$ and $dM_t=dN_t-\lambda_tdt$.

\end{proposition}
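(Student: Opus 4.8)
The plan is to convert the defining equation for the intensity into a linear Volterra equation and solve it via the resolvent kernel $\psi$. Starting from $\lambda_t=\mu+\int_0^t\phi(t-s)\,dN_s$ and inserting the decomposition $dN_s=dM_s+\lambda_s\,ds$, one obtains
$$\lambda_t=\mu+\int_0^t\phi(t-s)\,dM_s+\int_0^t\phi(t-s)\lambda_s\,ds.$$
Writing $g_t:=\mu+\int_0^t\phi(t-s)\,dM_s$, this reads $\lambda=g+\phi*\lambda$, whose unique locally bounded solution is $\lambda=g+\psi*g$, with $\psi=\sum_{k\ge 1}\phi^{*k}$. The convergence of this series in $L^1$ and the resolvent identity $\psi=\phi+\phi*\psi=\phi+\psi*\phi$ follow from the stability condition $\int_0^{+\infty}\phi(s)\,ds<1$, which renders the Neumann series geometrically convergent; the solution formula itself is obtained by iterating $\lambda=g+\phi*\lambda$ and observing that the remainder $\phi^{*(n+1)}*\lambda$ vanishes as $n\to\infty$.

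The next step is to substitute $g$ back into $\lambda=g+\psi*g$ and separate the deterministic from the stochastic contributions. The deterministic part yields $\mu+\mu\int_0^t\psi(t-s)\,ds$ at once, matching the first two terms of the claim. For the stochastic part it remains to show
$$\int_0^t\phi(t-s)\,dM_s+\int_0^t\psi(t-s)\Big(\int_0^s\phi(s-u)\,dM_u\Big)ds=\int_0^t\psi(t-s)\,dM_s.$$
Here I would apply a stochastic Fubini theorem to exchange the $ds$ and $dM_u$ integrations in the double integral, rewriting it as $\int_0^t\big(\int_u^t\psi(t-s)\phi(s-u)\,ds\big)dM_u$. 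The change of variable $v=s-u$ identifies the inner integral as $(\psi*\phi)(t-u)$, so the stochastic part collapses to $\int_0^t\big[\phi(t-u)+(\psi*\phi)(t-u)\big]dM_u=\int_0^t\psi(t-u)\,dM_u$ by the resolvent identity. Adding the two parts gives the stated formula.

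The main obstacle is justifying the interchange of the Lebesgue and stochastic integrals. Since $M$ is the compensated point process, I expect to avoid heavy machinery by splitting $dM_u=dN_u-\lambda_u\,du$: the integral against $dN_u$ is, pathwise, a finite sum over jump times on $[0,t]$, for which the classical Fubini theorem applies, while the integral against $\lambda_u\,du$ is an ordinary double integral handled by the deterministic Fubini theorem. Both are legitimate once one checks the integrability of $\phi$, $\psi$ and $s\mapsto\lambda_s$ on $[0,t]$, which is where the bookkeeping lies. An alternative, self-contained route bypasses the resolvent: iterating the defining equation by repeatedly substituting $\lambda_s=\mu+\int_0^s\phi(s-u)\,dN_u$ into itself produces, after $n$ steps, partial sums of the kernels $\phi^{*k}$ acting on $\mu\,ds$ and on $dM$, and controlling the remainder via $\int\phi<1$ makes the emergence of $\psi=\sum_k\phi^{*k}$ transparent at the cost of the same integrability estimates.
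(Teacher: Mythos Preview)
The paper does not actually prove this proposition: it is stated as a known property of Hawkes processes with a reference to \cite{bacryhawkesscaling}, and no proof appears in the body or the appendix. So there is nothing to compare your argument against.

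That said, your approach is the standard one and is correct. Decomposing $dN_s=dM_s+\lambda_s\,ds$ turns the intensity equation into the linear Volterra equation $\lambda=g+\phi*\lambda$, whose solution via the resolvent $\psi=\sum_{k\ge1}\phi^{*k}$ is exactly the route taken in the Hawkes literature (in particular in the reference the paper cites). The only point worth tightening is the stochastic Fubini step; your idea of splitting $dM_u=dN_u-\lambda_u\,du$ into a pathwise finite sum plus an ordinary Lebesgue integral is the cleanest way to handle it and avoids invoking a general stochastic Fubini theorem. The integrability checks you mention (local boundedness of $\lambda$ and $L^1$ convergence of $\sum_k\phi^{*k}$ from $\int\phi<1$) are routine. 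Your alternative iteration-of-the-definition route is equivalent and equally valid.
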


\noindent
Therefore, since $M$ is a martingale and $\psi$ is supported on $\mathbb{R}_+$, for $0\leq s,r\leq t$,
$$\E[N_t-N_s|F_r]=\int_s^t \E[\lambda_u|F_r]du$$ and $$\E[\lambda_u|F_r]=\mu+\mu \int_0^u \psi(u-x) dx+\int_0^r \psi(u-x)dM_x$$

\noindent
This formula allows us to compute the anticipation of the order flow and will be very useful to apply Equation \eqref{impactvolume} in the Hawkes context.\\

\noindent
Hawkes processes are a natural extension of Poisson processes and can be used to model the empirical clustering of same-sided trades.
More precisely, we will later show that a Hawkes order flow can almost reproduce the empirical long memory of the signed flow of market orders.

\subsubsection{Propagator model}

In \cite{propagator}, the order flow is modelled as a FARIMA process that enables to obtain the long memory property of this flow observed in practice. In order to have a diffusive price, the authors propose a price impact model called propagator model where the impact of a market order does not depend on the past of the order flow but is transient (its permanent market impact is null). Proceeding essentially as in \cite{gatheralfred}, this model can be extended to continuous time the following way:

\begin{definition}
A continuous time propagator model is an impact model where the price process $P$ is linked to the cumulated signed volume of market orders $V=V^a-V^b$ by:
$$P_t=P_0+\int_0^t \zeta(t-s) dV_s,$$
where $\zeta$ is a decay kernel.
\end{definition}

\begin{remark}
There is a small difference between this model and the model of \cite{gatheralfred}: In \cite{gatheralfred}, $V$ is the trading rate of a given trader while here it is the order flow of all the market.
\end{remark}

\subsubsection{Computation of the price dynamics}

From now on, we model the arrivals of buy and sell market orders as the points of two independent Hawkes processes $N^a$ and $N^b$ with identical kernel $\phi$ and exogenous intensity $\mu$. We also assume that the volume of each market order is constant equal to $v$.
\begin{proposition}
\label{theo2}
In the Hawkes order flow model, assuming that the price and volume satisfy the dynamics Equation \eqref{impactvolume}, then the price process follows a propagator model:

\begin{equation}
\label{propagator}
P_t=P_0+\int_0^t \zeta(t-s)(dN^a_s-dN^b_s),
\end{equation}
with $\zeta(t)=\kappa v(1+\int_t^{+\infty}\psi(u)-\int_0^t\psi(u-s)\phi(s)dsdu)$.
\end{proposition}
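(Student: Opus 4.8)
The plan is to reduce the statement to a computation of the conditional anticipation of the signed counting process and then to convert the resulting expression, which is naturally written against the compensated martingales, into one written against $dN^a-dN^b$. Since all market orders have the same volume $v$, we have $V^a_s=vN^a_s$ and $V^b_s=vN^b_s$, so Equation \eqref{impactvolume} reads
\begin{equation*}
P_t=P_0+\kappa v\,\lim_{s\rightarrow+\infty}\E[N^a_s-N^b_s\mid\mathcal{F}_t].
\end{equation*}
First I would fix $s\geq t$ and write $\E[N_s-N_t\mid\mathcal{F}_t]=\int_t^s\E[\lambda_u\mid\mathcal{F}_t]\,du$, then insert the formula for $\E[\lambda_u\mid\mathcal{F}_t]$ from Proposition \ref{formlambdamart} (taken at $r=t$). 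Applying this to $N^a$ and $N^b$ and subtracting, the two deterministic contributions $\mu$ and $\mu\int_0^u\psi(u-x)\,dx$ cancel because both processes share the same $\mu$ and $\phi$ (hence the same $\psi$). One is left with
\begin{equation*}
\E[N^a_s-N^b_s\mid\mathcal{F}_t]=(N^a_t-N^b_t)+\int_t^s\!\int_0^t\psi(u-x)\,d(M^a_x-M^b_x)\,du.
\end{equation*}

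Next I would apply a (stochastic) Fubini theorem to exchange the $du$ integral with the $d(M^a-M^b)$ integral, and then let $s\rightarrow+\infty$, using that $\int_t^s\psi(u-x)\,du\to\int_{t-x}^{+\infty}\psi(w)\,dw$. Writing $\Psi(\tau):=\int_\tau^{+\infty}\psi(w)\,dw$ for the tail of the resolvent, this gives
\begin{equation*}
\lim_{s\rightarrow+\infty}\E[N^a_s-N^b_s\mid\mathcal{F}_t]=(N^a_t-N^b_t)+\int_0^t\Psi(t-x)\,d(M^a_x-M^b_x).
\end{equation*}
The convergence of this limit is exactly what Assumption \ref{assu4} guarantees, and for Hawkes processes it holds because the stability condition $\int_0^{+\infty}\phi<1$ makes $\psi$ integrable, so $\Psi$ is finite and decays at infinity.

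The final step converts this into a propagator kernel. Substituting $d(M^a_x-M^b_x)=d(N^a_x-N^b_x)-(\lambda^a_x-\lambda^b_x)\,dx$ and using $\lambda^a_x-\lambda^b_x=\int_0^x\phi(x-y)\,d(N^a_y-N^b_y)$, a second application of Fubini to the double integral $\int_0^t\Psi(t-x)\int_0^x\phi(x-y)\,d(N^a_y-N^b_y)\,dx$ collects everything as a single integral against $d(N^a-N^b)$ with kernel, in the elapsed-time variable $\tau=t-y$,
\begin{equation*}
\frac{\zeta(\tau)}{\kappa v}=1+\Psi(\tau)-(\phi*\Psi)(\tau),\qquad (\phi*\Psi)(\tau)=\int_0^\tau\phi(s)\Psi(\tau-s)\,ds.
\end{equation*}
Undoing the tail integral $\Psi$ by a routine change of variables turns $(\phi*\Psi)(\tau)$ into $\int_\tau^{+\infty}\int_0^\tau\psi(u-s)\phi(s)\,ds\,du$, which is precisely the form stated for $\zeta$; this yields Equation \eqref{propagator}. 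I expect the main obstacle to be rigorously justifying the two Fubini exchanges together with the passage to the limit $s\rightarrow+\infty$ inside the stochastic integral: one must check the relevant integrability so that the stochastic Fubini theorem applies and that $\int_t^s\psi(u-x)\,du$ converges to $\Psi(t-x)$ in a sense compatible with $L^2$-convergence of the stochastic integrals, both of which rest on the stability of the Hawkes kernel.
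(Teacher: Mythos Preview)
Your proposal is correct and follows essentially the same route as the paper: both start from Equation \eqref{impactvolume}, use Proposition \ref{formlambdamart} to express the conditional expectation in terms of $dM^a-dM^b$, then convert back to $dN^a-dN^b$ via $dM=dN-\lambda\,dx$ with $\lambda-\mu=\phi*dN$, and finish with Fubini to read off the propagator kernel. The only cosmetic differences are that you package the tail of the resolvent as $\Psi$ and take the limit $s\to+\infty$ before converting $dM$ to $dN$, whereas the paper does the conversion first; your more explicit discussion of the Fubini/limit justifications is a welcome addition but not a change of method.
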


\noindent
The proof is given in appendix.\\

\noindent
Let us notice that the kernel does not tend to 0 since there is permanent impact.\\

\noindent
Our propagator kernel $\zeta$ compensates the correlation of the order flow implied by the Hawkes kernel $\phi$ to recover a martingale price.

\subsubsection{Properties of our propagator model}

In this paragraph, we place ourselves in a general framework which includes the result of Proposition \ref{theo2} and generalizes it to critical Hawkes order flows (with $\int \phi =1$, see \cite{hawkescritic} for existence). We show that a continuous time propagator price model combined with a Hawkes order flow can lead to a martingale price under a simple condition on $\zeta$ and $\phi$. We then use this condition to obtain a nicer expression of $\zeta$. Indeed, if:

\begin{itemize}
\item{The buy and sell market order flows are two independent Hawkes processes with kernel $\phi$:
\begin{center}
$\lambda^{a/b}=\mu+\int_0^t\phi(t-s)dN^{a/b}_s.$
\end{center}
}
\item{The price process follows a propagator model:
\begin{center}
$P_t=P_0+\int_0^t\zeta(t-s)(dN^a_s-dN^b_s).$
\end{center}
}
\end{itemize}

\noindent
Then, 
\begin{itemize}
\item{If there is no trade in $t$, the price is differentiable in $t$ with derivative
$$P'_t=\int_0^t\zeta'(t-s)(dN^a_s-dN^b_s).$$}
\item{If there is a buy trade in $t$ (which happens with intensity $\lambda_t^a=\mu+\int_0^t\phi(t-s)dN^{a}_s$), the price jumps upward of $\zeta(0)$.}
\item{If there is a sell trade in $t$ (which happens with intensity $\lambda_t^b=\mu+\int_0^t\phi(t-s)dN^{b}_s$), the price jumps downward of $\zeta(0)$.}
\end{itemize}
Therefore,
\begin{center}
$\displaystyle\lim_{h\rightarrow 0} \frac{E[P_{t+h}|\mathcal{F}_t]-P_t}{h}=\zeta(0)\int_0^t\phi(t-s)(dN^a_s-dN^b_s)+\int_0^t\zeta'(t-s)(dN^a_s-dN^b_s).$
\end{center}

\noindent
Therefore, we have the following result:
\begin{proposition}
In the framework defined above, if the price is a martingale then:
\begin{equation}
\label{phi-zeta}
\zeta'(x)=-\zeta(0)\phi(x).
\end{equation}

\noindent
Given a function $\zeta$, we can compute a unique $\phi_\zeta$ which satisfies \eqref{phi-zeta}. Conversely, given a $\phi$ and a $\zeta(0)$ (or $\zeta$ anywhere else), we can compute a unique $\zeta_\phi$ which satisfies \eqref{phi-zeta}.
\end{proposition}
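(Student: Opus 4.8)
The plan is to read the result directly off the drift computation displayed immediately before the statement. Since $P$ is a martingale, its infinitesimal drift must vanish identically, so the right-derivative $\lim_{h\to 0}\bigl(\E[P_{t+h}\mid\mathcal{F}_t]-P_t\bigr)/h$ must equal zero for every $t\geq 0$, almost surely. Writing $g(x):=\zeta(0)\phi(x)+\zeta'(x)$, the displayed formula then becomes
$$\int_0^t g(t-s)\,(dN^a_s-dN^b_s)=0\qquad\text{for all }t\geq 0,\ \text{almost surely.}$$
The whole proposition reduces to showing that this family of identities forces $g\equiv 0$ on $\mathbb{R}_+$, since $g=0$ is exactly Equation \eqref{phi-zeta}.

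To obtain $g\equiv 0$, I would use that the jump locations of the two independent point processes can fall anywhere, so that the random measure $dN^a-dN^b$ probes the deterministic kernel $g$ at arbitrary separations. Fix $x>0$ and consider the event on which the first jump of the superposed process $N^a+N^b$ is a buy, occurring at some time $S_1$, with no further point of either process in $(S_1,S_1+x]$. Because both intensities are bounded below by $\mu>0$, this event has positive probability. On it, the only jump in $[0,t]$ with $t=S_1+x$ is the buy at $S_1$, so the identity above collapses to $g(x)=0$. As the almost-sure family of identities holds on a set of full probability, which necessarily meets this positive-probability event, we get $g(x)=0$; since $x>0$ is arbitrary, $g\equiv 0$. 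Note that this delivers $g(x)=0$ for each fixed $x$ separately, so no continuity of $g$ is required.

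The correspondence between $\phi$ and $\zeta$ is then just the resolution of the first-order linear relation \eqref{phi-zeta}. Given $\zeta$ (with $\zeta(0)>0$, as there is permanent impact), the kernel is determined pointwise by $\phi_\zeta(x)=-\zeta'(x)/\zeta(0)$, which is clearly unique. Conversely, given $\phi$ together with the single value $\zeta(0)$, integrating \eqref{phi-zeta} from $0$ yields
$$\zeta_\phi(x)=\zeta(0)\Bigl(1-\int_0^x\phi(u)\,du\Bigr),$$
the unique antiderivative with that initial value; prescribing $\zeta$ at any other point $x_0$ with $1-\int_0^{x_0}\phi(u)\,du\neq 0$ fixes $\zeta(0)$ and hence all of $\zeta_\phi$.

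I expect the only delicate step to be the passage from the almost-sure vanishing of the random integrals to the pointwise statement $g(x)=0$: one must justify that the law of the jump times has rich enough support to isolate a single buy at a prescribed distance $x$ from the evaluation time, which is precisely where the lower bound $\mu>0$ on the Hawkes intensity and the independence of $N^a$ and $N^b$ are used. Everything after that is the elementary ODE bookkeeping of the preceding paragraph.
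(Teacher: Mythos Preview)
The paper gives no separate proof of this proposition; it is stated as an immediate consequence of the drift formula displayed just before it. The paper simply takes for granted that the vanishing of
\[
\int_0^t\bigl(\zeta(0)\phi(t-s)+\zeta'(t-s)\bigr)\,(dN^a_s-dN^b_s)
\]
for all $t$ forces $\zeta'=-\zeta(0)\phi$, and does not discuss the passage from the random identity to the deterministic relation on the kernel. Your proposal follows the same route but supplies this missing step, so in spirit it is the paper's argument made explicit.

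There is one technical wrinkle in your justification. You evaluate the identity at the random time $t=S_1+x$, which requires the drift to vanish for \emph{all} $t\geq 0$ simultaneously on a single set of full probability. The martingale property, however, only gives the vanishing for each \emph{fixed} $t$ almost surely, and the exceptional null sets may depend on $t$; upgrading to the uniform statement is exactly where one would normally invoke continuity of $g$ (so that $t\mapsto\int_0^t g(t-s)\,d(N^a-N^b)_s$ is continuous between jumps). A cleaner variant of your argument that avoids this issue is to fix $t$, condition on the event that there is exactly one jump in $[0,t]$ and it is a buy, and use that its location then has a strictly positive density on $[0,t]$ (since $\lambda^a\geq\mu>0$). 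This yields $g(t-s)=0$ for Lebesgue-almost every $s\in[0,t]$, hence $g=0$ a.e.\ on $\mathbb{R}_+$, which is all that matters for the price process. So your claim that ``no continuity of $g$ is required'' is correct only in the almost-everywhere sense; the pointwise conclusion does need some regularity. The correspondence between $\phi$ and $\zeta$ in your final paragraph is handled exactly as the paper does it (cf.\ Equation~\eqref{zeta2}).
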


\begin{remark}
Let us stress that in the general framework of this paragraph, the Hawkes order flows can be critical ($\int\phi=1$). Moreover, the criticality of the Hawkes process is equivalent to the transience of the impact of a market order:
\begin{center}
$\displaystyle\lim_{x \rightarrow +\infty}\zeta(x)=0 \Leftrightarrow \int\phi=1.$
\end{center}

\noindent
In this critical case, Assumption \ref{assu4} is not satisfied. We are thus outside of the framework of Equation \eqref{impactvolume}.
\end{remark}

\noindent
Going back to the framework of Proposition \ref{theo2} (in which $\int \phi <1$), since our price $P_t$ is an expectation conditionally on $\mathcal{F}_t$, it is by construction a martingale and therefore, the implied $\zeta$ must satisfy \eqref{phi-zeta}. Indeed, we have:
\begin{eqnarray*}
\zeta'(t)&=& \kappa v[-\psi(t)+\int_0^t\psi(t-s)\phi(s)ds-\int_t^{+\infty}\psi(u-t)\phi(t)dy]\\
&=&\kappa v[-\psi(t)+\psi\ast\phi(t)-\phi(t)|\psi|]\\
&=&-\kappa v(|\psi|+1)\phi(t)
\end{eqnarray*}
since $$\psi\ast\phi=\psi-\phi$$ and thus
$$\zeta'(t)=-\zeta(0)\phi(t).$$

\begin{theorem}
In the setting of Proposition \ref{theo2}, we can get another expression for $\zeta$:
\begin{equation}
\label{zeta2}
\zeta(x)=\zeta(0)(1-\int_0^x\phi(s)ds)=\kappa v\frac{1}{1-\int \phi }((1-\int \phi )+\int_x^{+\infty} \phi(s)ds).
\end{equation}
\end{theorem}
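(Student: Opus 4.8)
The plan is to integrate the differential relation \eqref{phi-zeta}, which has just been verified in the setting of Proposition \ref{theo2}, and then to make the constant $\zeta(0)$ explicit so as to recognize the second closed form. First I would start from $\zeta'(x)=-\zeta(0)\phi(x)$ and integrate between $0$ and $x$. Since $\int_0^x \zeta'(s)\,ds=\zeta(x)-\zeta(0)$, this gives at once
$$\zeta(x)=\zeta(0)\Big(1-\int_0^x \phi(s)\,ds\Big),$$
which is the first of the two claimed expressions.

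Next I would compute $\zeta(0)$. Evaluating the formula of Proposition \ref{theo2} at $x=0$ makes the $\int_0^t$ term vanish and leaves $\zeta(0)=\kappa v(1+|\psi|)$, where $|\psi|=\int_0^{+\infty}\psi(s)\,ds$. To evaluate $|\psi|$, I would use $\psi=\sum_{k\geq 1}\phi^{*k}$ together with the elementary fact that $\int_0^{+\infty}\phi^{*k}(s)\,ds=\big(\int\phi\big)^k$, the integral of a convolution being the product of the integrals. Interchanging sum and integral is legitimate since all terms are nonnegative, and the resulting geometric series converges because of the stability condition $\int\phi<1$; this yields $|\psi|=\frac{\int\phi}{1-\int\phi}$ and hence $1+|\psi|=\frac{1}{1-\int\phi}$, so that $\zeta(0)=\frac{\kappa v}{1-\int\phi}$.

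Finally I would substitute this value of $\zeta(0)$ into the first expression and rewrite the bracket using
$$1-\int_0^x\phi(s)\,ds=\Big(1-\int\phi\Big)+\int_x^{+\infty}\phi(s)\,ds,$$
which follows from $\int_0^{+\infty}\phi=\int_0^x\phi+\int_x^{+\infty}\phi$. This produces exactly the second form \eqref{zeta2}. The argument is essentially routine once \eqref{phi-zeta} is in hand; the only step deserving a word of justification is the termwise integration of the series defining $\psi$, which I would dispatch by nonnegativity (Tonelli) and the convergence guaranteed by $\int\phi<1$.
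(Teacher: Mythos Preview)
Your proof is correct and follows the same route as the paper: integrate the relation $\zeta'(x)=-\zeta(0)\phi(x)$ established just before the theorem, then identify $\zeta(0)=\kappa v(1+|\psi|)=\kappa v/(1-\int\phi)$ and rewrite the bracket. You are simply more explicit than the paper about the computation of $1+|\psi|$ via the geometric series, which is a welcome clarification.
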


\noindent
The ratio between the temporary and the permanent impact of an order satisfies
$$\lim\zeta/\zeta(0)=1-\int \phi .$$
Therefore, when $\int \phi $ is close to one which, we will see, corresponds to the apparent long memory asymptotic, the temporary impact is much larger than the permanent impact. This is due to the fact that in this asymptotic a market order on average ``generates'' a lot of market orders, see next paragraph.

\subsubsection*{Cluster representation of Hawkes processes}

In this paragraph, we recall the branching construction of Hawkes processes which will allow us to interpret $\int \phi$ in terms of market impact.\\

\noindent
Let us define a population model (see the Introduction of \cite{hawkescritic}): At time zero, there are no individuals. Some individuals (migrants) arrive as a uniform Poisson process with intensity $\mu$. If a migrant arrives at time $s$, the birth dates of its children form a Poisson process of intensity $\phi(.-s)$, with $\int \phi <1$. In the same way, if a child is born in $s'$, the birth dates of its children form a Poisson process of intensity $\phi(.-s')$. We call $N_t$ the number of individuals who were born or migrated until $t$. We have that $N$ is a point process of intensity
$$\lambda_t=\mu+\int_0^t\phi(t-s)dN_s,$$
and therefore, $N$ is a Hawkes process.\\

\noindent
If we now name a cluster the set of all individuals who descend from the same migrant, a Hawkes process is the superposition of independent clusters which arrive at a rate $\mu$. This is the Poisson cluster representation of Hawkes processes, see \cite{hawkescluster}. Replacing ``individual" by ``market order" and ``cluster" by ``metaorder", a Hawkes process can be seen as the superposition of independent executions of metaorders. In this interpretation, the average size of a metaorder is $1/(1-\int \phi )$. Therefore, when $\int \phi $ is close to one, a metaorder is on average made of a large number of market orders. The parameter $\int \phi $ can thus be seen as the degree of endogeneity of the market and has recently been the subject of many investigations, see \cite{filimonov2013apparent}, \cite{hardiman2013critical} and \cite{jaisson2013limit}.

\subsection{Impact of metaorders}

In this paragraph, we study the impact of a large group of trades that we see as a metaorder in an asymptotic framework. To do so, we assume that the Hawkes order flows are nearly unstable (their kernel's norm is close to one) which enables to retrieve the persistence property which is observed on the data.

\subsubsection{Apparent long memory}
\label{alm}

It is well known that the sign of market orders presents long memory: Its autocorrelation function asymptotically behaves as a power law with exponent lower than one, see \cite{lillolm}. Recall that the Fourier transforms of the kernel $\phi$ and of the autocorrelation $Cov(.,h)$ of the increments of length $h$\footnote{$Cov(\tau,h)=\mathbb{E}[(N_{t+\tau+h}-N_{t+\tau})(N_{t+h}-N_{t})]-(\mathbb{E}[N_{t+h}-N_{t}])^2$.} of a stationary Hawkes process (denoted respectively by $\hat{\phi}$ and $\widehat{Cov}(.,h)$) are linked by:
$$\widehat{Cov}(z,h)=h\frac{\frac{\mu}{1-\int \phi }\hat{g}^h_z}{|1-\hat{\phi}(z)|^2},$$
where $\hat{g}^h_z$ is the Fourier transform of $g^h_\tau=(1-|\tau|/h)^+$, see \cite{bacryhawkesesti}.\\

\noindent
Therefore, since $$\widehat{Cov}(0,h)=\int_0^\infty Cov(\tau,h)d\tau$$ and $$\int\phi=\hat{\phi}(0),$$the infinity of the integral of the correlation function, which defines long memory, implies that the norm of the kernel must be equal to one. Such ``long memory" Hawkes processes have been studied in \cite{hawkescritic}.\\

\noindent
However, estimations performed in \cite{bacryhawkesorder}, \cite{filimonov2013apparent} and \cite{hardiman2013critical} seem to show that $\int \phi $ is close to one but strictly lower than one ($\int\phi\sim 0.9$). Such nearly unstable Hawkes processes have been studied in \cite{jaisson2013limit}, where it is shown that their diffusive limit is an integrated Cox-Ingersoll-Ross process. However, the latter work assumes that the shape of the kernel has an average in the sense that $\int x \phi(x) dx<+\infty$, which is not really in agreement with the power law persistence of the sign process. Qualitative arguments and numerical simulations presented in \cite{bacryhawkesorder} show that taking a nearly unstable Hawkes process with $\phi$ behaving asymptotically as $1/x^{1+\alpha}$ should lead to apparent long memory up to the time scale $1/(1-\int \phi )^{1/\alpha}$ (that is when the gap $\tau$ between the considered $h$-increments is much lower than $1/(1-\int \phi )^{1/\alpha}$). This is consistent with the empirical kernel estimations performed in the same paper.\\

\noindent
In order to give a formal version of this statement, we proceed as in \cite{jaisson2013limit} and consider a sequence of renormalized Hawkes processes whose kernels'norm tends to one ``faster" than the observation scale $T$ tends to infinity.

\subsubsection*{Asymptotic framework}

For every observation scale $T>0$, we define the stationary Hawkes process $N^T$ of exogenous intensity $\mu^T=C_\mu(1-a_T)T^{2\alpha-1}$ and of kernel $\phi^T=a_T\Phi$, where $\int \Phi=1$ and $(a_T)$ is a sequence of positive real numbers smaller than one which tends to one as $T$ tends to infinity. Furthermore, we assume that $$\Phi(x)\underset{x\rightarrow+\infty}{\sim}\frac{\alpha c^\alpha}{x^{1+\alpha}},$$
where $c>0$ and $\alpha\in]0,\frac{1}{2}[$.\\
%that will be the $L^1$ norm of the sequence of kernels and $\Phi$ is a function of integral one and of queue in $(\alpha c^\alpha)/(x^{1+\alpha})$. $\Phi$ is the (fixed) shape of the sequence of kernels of the sequence of Hawkes processes.\\

\noindent
Since $T$ is the observation scale, we consider the process $N^T_{Tt}$ that we renormalise by considering $X^T_t:=A_T N^T_{Tt}$, with $A_T=(1-a_T)/(T\mu^T)=1/(T^{2\alpha})$.
Of course, $A_T$ and $\mu^T$ are chosen so that the expectation and the covariance of the sequence of processes converge.\\

\noindent
Considering the process at time scales below  $1/(1-\int \phi )^{1/\alpha}$ means that the observation scale $T$ is much smaller than $1/(1-a_T)^{1/\alpha}$. It can be formally written the following way:

\begin{assumption}
\label{scalelm}
$T(1-a_T)^{1/\alpha}\rightarrow 0.$
\end{assumption}

\subsubsection*{Result}

Let $h>0$ be fixed, and the covariance of the $h$-increments of $X^T$ be denoted by:
\begin{center}
$C^{X,T}_{\tau,h}:=Cov[(X^T_{t+\tau+h}-X^T_{t+\tau}),(X^T_{t+h}-X^T_{t})].$
\end{center}

\noindent
The following result holds.

\begin{theorem}
\label{theo1}
Under Assumption \ref{scalelm}:
\begin{equation}
\label{fbm}
C^{X,T}_{\tau,h} \rightarrow C^{X}_{\tau,h}=K(|\tau+h|^{2H}+|\tau-h|^{2H}-2|\tau|^{2H}),
\end{equation}
with $H=1/2+\alpha$ and $K$ is a constant explicitly given in the proof in appendix.\\

\noindent
This limiting function satisfies the long memory property: for any $h>0$,

\begin{center}
$C^{X}_{\tau,h}\underset{\tau\rightarrow +\infty}{\sim} \frac{C_h}{\tau^{1-2\alpha}}.$
\end{center}

\end{theorem}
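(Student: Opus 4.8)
The plan is to push everything onto the Fourier side, using the spectral (Bartlett) formula for the increment covariance of a stationary Hawkes process quoted just before the theorem, and then to carry out the scaling limit directly on the resulting integral. First I would record the elementary identity coming from $X^T_t = A_T N^T_{Tt}$: an $h$-increment of $X^T$ equals $A_T$ times a $(Th)$-increment of $N^T$, so $C^{X,T}_{\tau,h} = A_T^2\,\mathrm{Cov}_{N^T}(T\tau,Th)$, where $\mathrm{Cov}_{N^T}$ is the increment covariance of $N^T$. Inverting $\widehat{Cov}(z,h)=h\,\frac{\mu/(1-\int\phi)\,\hat g^h_z}{|1-\hat\phi(z)|^2}$, using $\hat g^{h'}(z)=2(1-\cos(zh'))/(h'z^2)$, and substituting the change of variable $z=w/T$ together with $\mu^T=C_\mu(1-a_T)T^{2\alpha-1}$, $\phi^T=a_T\Phi$ and $A_T=T^{-2\alpha}$, all the powers of $T$ cancel except one block, and one is left with
\[
C^{X,T}_{\tau,h}=\frac{C_\mu}{\pi}\int_{\mathbb{R}}e^{iw\tau}\,\frac{1-\cos(wh)}{w^2}\,\frac{dw}{T^{2\alpha}\,|1-a_T\hat\Phi(w/T)|^2}.
\]
The normalizations $\mu^T$ and $A_T$ are engineered precisely so that this is the only surviving $T$-dependence.

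The second step is to analyze the denominator $T^{2\alpha}|1-a_T\hat\Phi(w/T)|^2=|T^\alpha(1-a_T\hat\Phi(w/T))|^2$. From the tail assumption $\Phi(x)\sim\alpha c^\alpha x^{-1-\alpha}$ with $\int\Phi=1$ and $0<\alpha<1$, the small-frequency behavior of $\hat\Phi$ is governed by the tail, and a one-sided stable-type Tauberian computation ($\int_0^\infty(1-e^{-izx})x^{-1-\alpha}dx=\tfrac{\Gamma(1-\alpha)}{\alpha}(iz)^\alpha$) gives $1-\hat\Phi(z)\sim c^\alpha\Gamma(1-\alpha)(iz)^\alpha$ as $z\to0$. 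Writing $T^\alpha(1-a_T\hat\Phi(w/T))=T^\alpha(1-a_T)+a_T\,T^\alpha(1-\hat\Phi(w/T))$, the first term tends to $0$ by Assumption \ref{scalelm} (which is exactly $T^\alpha(1-a_T)\to0$) while the second tends to $c^\alpha\Gamma(1-\alpha)(iw)^\alpha$. Hence the denominator converges pointwise to $|c^\alpha\Gamma(1-\alpha)|^2|w|^{2\alpha}$. The role of the assumption is thus transparent: it places us in the regime where the mean-reversion term $(1-a_T)$ is negligible against the power-law term, so that the limiting spectral density is the pure power law $\propto|w|^{-2\alpha}$ rather than the mixed (fractional-CIR–type) behavior of \cite{jaisson2013limit}.

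The third step is to pass the limit inside the integral, obtaining
\[
C^X_{\tau,h}=\frac{C_\mu}{\pi\,|c^\alpha\Gamma(1-\alpha)|^2}\int_{\mathbb{R}}e^{iw\tau}\,\frac{1-\cos(wh)}{|w|^{2+2\alpha}}\,dw,
\]
and to recognize this as the increment covariance of a fractional Brownian motion of Hurst index $H=1/2+\alpha$: indeed $|w|^{-2\alpha}=|w|^{1-2H}$ is, up to a constant, the spectral density of fractional Gaussian noise, and $(1-\cos(wh))/w^2$ is the spectral weight of an $h$-increment, so the standard Fourier identity evaluates the integral to $K(|\tau+h|^{2H}+|\tau-h|^{2H}-2|\tau|^{2H})$, which is \eqref{fbm}, with $K$ collecting the above prefactors. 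I expect the domination required to justify this interchange to be the main obstacle. Near $w=0$ the limit integrand behaves like $(h^2/2)|w|^{-2\alpha}$, which is integrable precisely because $\alpha<1/2$ — this is where the restriction $\alpha\in]0,\tfrac12[$ is used, the $1-\cos$ numerator cancelling the singularity at the right order. The genuinely delicate point is a $T$-uniform lower bound on $|1-a_T\hat\Phi(w/T)|$ valid for all $w$, not only near the origin: for $|w|$ comparable to or larger than $T$, $w/T$ need not be small, so one must know that $\hat\Phi$ stays quantitatively away from $1$ off the origin and decays for large argument, which needs mild regularity/integrability of $\Phi$ beyond its tail. I would handle this by splitting $\mathbb{R}$ into $\{|w|\le\delta T\}$, where the asymptotic expansion applies and dominated convergence can be run, and $\{|w|>\delta T\}$, where $|1-a_T\hat\Phi|$ is bounded below using $\sup_{|z|\ge\delta}|\hat\Phi(z)|<1$ and Riemann–Lebesgue decay, and showing the latter region contributes $o(1)$.

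Finally, the long-memory tail is an elementary consequence of the explicit limit. A Taylor expansion as $\tau\to+\infty$ gives
\[
|\tau+h|^{2H}+|\tau-h|^{2H}-2|\tau|^{2H}\underset{\tau\to+\infty}{\sim}2H(2H-1)h^2\,\tau^{2H-2},
\]
and since $2H-2=2\alpha-1=-(1-2\alpha)$ with $2H-1=2\alpha>0$, this yields $C^X_{\tau,h}\sim C_h\,\tau^{-(1-2\alpha)}$ with $C_h=K\,2H(2H-1)h^2>0$. As $1-2\alpha\in]0,1[$, the correlations are not summable, which is exactly the long-memory property asserted.
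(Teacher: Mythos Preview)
Your proposal is correct and follows essentially the same route as the paper: both arguments pass through the spectral (Bartlett) formula for the Hawkes increment covariance, use the tail expansion $1-\hat\Phi(z)\sim\mathrm{const}\cdot z^\alpha$ (the paper packages the constant as $-\alpha\theta(1+\alpha)c^\alpha$, which coincides with your $c^\alpha\Gamma(1-\alpha)(i)^\alpha$), invoke Assumption~\ref{scalelm} to kill the $T^\alpha(1-a_T)$ term, and then justify the limit by dominated convergence via a uniform lower bound $|1-a_T\hat\Phi(z/T)|\ge c_2\min(|z/T|^\alpha,1)$ together with $|\hat g^h_z|\le c_1\min(|z|^{-2},1)$. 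The only cosmetic difference is ordering: the paper first computes the pointwise limit of $\hat C^{X,T}_{z,h}$ and then applies Fourier inversion with domination, whereas you invert first and take the limit inside the $w$-integral; the domination and the restriction $\alpha<1/2$ play the identical role in both, and your final Taylor expansion for the long-memory tail matches the paper's conclusion.
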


\noindent
The proof is given in appendix.

\begin{remark}
Equation \eqref{fbm}, corresponds exactly to the correlation function of a fractional Brownian motion of Hurst index $H$.
\end{remark}

\begin{remark}
A more precise behaviour of such ``heavy tailed nearly unstable Hawkes processes'' is heuristically discussed in \cite{jaisson2013limit}.
\end{remark}

\noindent
In \cite{bacryhawkesorder}, it is shown that the time scale $1/(1-\int \phi )^{1/\alpha}$ up to which a Hawkes process can present apparent long memory can be rather large ($10^3$ seconds). This implies that modelling financial order flows as nearly unstable Hawkes processes is consistent with their empirical apparent persistence.

\subsubsection{Modelling metaorders}

In the above framework, $N^a$ and $N^b$ are the flows of anonymous market orders. This corresponds to modelling the market from the point of view of a passive agent who does not use orders and does not know who uses the different orders. In order to compute the impact function of metaorders, it is convenient to look at the market from the point of view of someone who is executing a (buy) metaorder, see \cite{bacryhawkesorder}. To do so, we consider that the total order flow is the sum of anonymous orders (which are modelled as two independent Hawkes processes as before) and labelled orders which correspond to ``our'' order flow.
More precisely, during the execution of a metaorder, we model our order flow as a Poisson process $P^{F,\tau}$ of intensity $F$ on $[0,\tau]$. The total buy and sell order flows are thus $N^a+P^{F,\tau}$ and $N^b$.\\

\noindent
Let us compute the impact function of a metaorder in our model. From the point of view of the rest of the market, there is no difference between our orders and anonymous orders. It thus seems natural to assume that the price impact of our orders is the same as the price impact of anonymous orders. Therefore, according to the propagator model:

\begin{center}
$P_t=P_0+\int_0^t \zeta(t-s)(dN^a_s-dN^b_s)+\int_0^t \zeta(t-s)dP^{F,\tau}_s.$
\end{center}

\noindent
The processes $N^a$ and $N^b$ having the same average intensities, taking expectations, we get the following result:
\begin{proposition}
The impact function $MI$ of a metaorder executed between $0$ and $\tau$ is:

\begin{equation}
\label{metaorderimpact}
MI(t):= \E[P_t-P_0]=F \int_0^{t\wedge\tau} \zeta(t-s)ds.
\end{equation}
\end{proposition}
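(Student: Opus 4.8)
The plan is to compute $\E[P_t-P_0]$ by taking expectations directly in the propagator representation of the price displayed just above the statement,
$$P_t-P_0=\int_0^t \zeta(t-s)(dN^a_s-dN^b_s)+\int_0^t \zeta(t-s)\,dP^{F,\tau}_s,$$
and exploiting linearity to split the impact into a contribution from the anonymous order flow and a contribution from the labelled metaorder flow. Since $\zeta$ is a deterministic and, by \eqref{zeta2}, bounded kernel, each of the two terms can be treated by interchanging expectation and integration.

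First I would argue that the anonymous contribution vanishes. The processes $N^a$ and $N^b$ are independent Hawkes processes sharing the same kernel $\phi$ and the same exogenous intensity $\mu$, hence they have identical mean intensity measures, $\E[dN^a_s]=\E[dN^b_s]$ for every $s$. Because $\zeta$ is deterministic, interchanging expectation and the integral gives
$$\E\Big[\int_0^t \zeta(t-s)(dN^a_s-dN^b_s)\Big]=\int_0^t \zeta(t-s)\big(\E[dN^a_s]-\E[dN^b_s]\big)=0.$$
Note that only equality of the two mean measures is needed here, not independence, and in particular no stationarity assumption on the anonymous flows is required.

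Next I would evaluate the labelled contribution. By construction $P^{F,\tau}$ is a Poisson process of intensity $F$ supported on $[0,\tau]$, so its mean (intensity) measure is $F\,\mathbf{1}_{[0,\tau]}(s)\,ds$. Campbell's formula for a Poisson process, equivalently Fubini applied to the decomposition of $dP^{F,\tau}_s$ into its compensator $F\,\mathbf{1}_{[0,\tau]}(s)\,ds$ plus a mean-zero martingale part, then yields
$$\E\Big[\int_0^t \zeta(t-s)\,dP^{F,\tau}_s\Big]=\int_0^t \zeta(t-s)\,F\,\mathbf{1}_{[0,\tau]}(s)\,ds=F\int_0^{t\wedge\tau}\zeta(t-s)\,ds,$$
where the upper endpoint $t\wedge\tau$ records that the metaorder is active only on $[0,\tau]$ while we integrate only up to the current time $t$. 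Adding the two contributions yields \eqref{metaorderimpact}.

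The computation is essentially routine; the only points requiring a little care are the justification of the interchange of expectation and integration (guaranteed by the boundedness of $\zeta$ from \eqref{zeta2} together with the finiteness of the mean measures on $[0,t]$) and the bookkeeping of the integration limits, which produces the minimum $t\wedge\tau$ rather than a fixed upper bound. The conceptual content is simply that, under the modelling assumption that labelled and anonymous orders impact the price through the same propagator $\zeta$, the symmetric anonymous background averages out and only the deterministic excess buy pressure $F$ of the metaorder survives in expectation.
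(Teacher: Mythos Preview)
Your argument is correct and follows exactly the route the paper indicates: the paper's entire proof is the single sentence ``The processes $N^a$ and $N^b$ having the same average intensities, taking expectations, we get the following result,'' and you have simply unpacked that sentence with the appropriate care about interchanging expectation and integration and about the Poisson intensity of $P^{F,\tau}$.
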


\subsubsection{Apparent power law impact function}

Here, we consider that the Hawkes processes used to model the order flows are close to criticality (which corresponds to empirical measures, see Section \ref{alm}). We show that if the length of the metaorder $\tau$ is small compared to the correlation scale $1/(1-\int \phi )^{1/\alpha}$ that appears in Section \ref{alm}: $1\ll\tau\ll 1/(1-\int \phi )^{1/\alpha}$, then the renormalized impact function is close to a power law with exponent $1-\alpha$.\\

\noindent
In order to formally express and show this statement, we consider the same sequence of Hawkes processes as in Section \ref{alm}. Let us also denote $(\tau^T)_T$ the sequence of sizes of metaorders. In the same vein as before, we formally write that the length of metaorders is small compared to the correlation scale as:
\begin{assumption}
\label{metaordersize}
$\tau^T\rightarrow +\infty$ and $\tau^T(1-a_T)^{1/\alpha}\rightarrow 0.$
\end{assumption}

\noindent
Let us define the renormalized impact for $t\in[0,1]$: 
\begin{center}
$RMI^T(t)=\frac{1-a_T}{(\tau^T)^{1-\alpha}}MI^T(t\tau^T).$
\end{center}

\noindent
We have the following result:

\begin{theorem}
\label{theo3}
Under Assumption \ref{metaordersize}, the market impact is asymptotically a power law:
$$RMI^T(t)\rightarrow K' t^{1-\alpha},$$
with $K'=\frac{c^\alpha\kappa v \Phi}{1-\alpha}$.
\end{theorem}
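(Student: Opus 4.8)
The plan is to start from the explicit metaorder impact formula in Equation \eqref{metaorderimpact}, insert the closed form of the propagator kernel from Equation \eqref{zeta2}, specialize both to the renormalized sequence $N^T$, and then perform the asymptotics of the resulting integral under Assumption \ref{metaordersize}. First I would observe that for $t\in[0,1]$ we have $t\tau^T\wedge\tau^T=t\tau^T$, so Equation \eqref{metaorderimpact} together with the change of variable $u=t\tau^T-s$ gives $MI^T(t\tau^T)=F\int_0^{t\tau^T}\zeta^T(u)\,du$. Using Equation \eqref{zeta2} with $\int\phi^T=a_T$ and $\phi^T=a_T\Phi$, the kernel reads $\zeta^T(u)=\kappa v\bigl(1+\tfrac{a_T}{1-a_T}G(u)\bigr)$, where $G(u):=\int_u^{+\infty}\Phi(s)\,ds$ is a fixed ($T$-independent) tail function.

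Plugging this in and multiplying by the renormalization factor $\tfrac{1-a_T}{(\tau^T)^{1-\alpha}}$, the renormalized impact splits as $RMI^T(t)=(1-a_T)F\kappa v\,t\,(\tau^T)^{\alpha}+\tfrac{a_TF\kappa v}{(\tau^T)^{1-\alpha}}\int_0^{t\tau^T}G(u)\,du$, and I would treat the two terms separately. The first contains the factor $(1-a_T)(\tau^T)^{\alpha}=\bigl((1-a_T)^{1/\alpha}\tau^T\bigr)^{\alpha}$, which tends to $0$ precisely by Assumption \ref{metaordersize}; hence this term vanishes. For the second term the key input is the tail of $\Phi$: from $\Phi(x)\sim\alpha c^\alpha x^{-1-\alpha}$ one gets $G(u)\sim c^\alpha u^{-\alpha}$ as $u\to+\infty$, and since $\alpha<1$ integrating this regularly varying function of index $-\alpha$ (a Karamata-type argument) yields $\int_0^{Y}G(u)\,du\sim\tfrac{c^\alpha}{1-\alpha}Y^{1-\alpha}$ as $Y\to+\infty$. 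Applying this with $Y=t\tau^T\to+\infty$ and using $a_T\to1$, the second term converges to $\tfrac{F\kappa v\,c^\alpha}{1-\alpha}t^{1-\alpha}$, which is the announced $K't^{1-\alpha}$ (the metaorder rate $F$ being the prefactor written in the statement).

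The main obstacle is making the integral asymptotics rigorous enough to combine cleanly with the vanishing of the first term. The cleanest route is to split $\int_0^{Y}G$ at a fixed level $A$: on $[0,A]$ the integral is a bounded constant, negligible after division by $Y^{1-\alpha}$ since $\alpha<1$, while on $[A,Y]$ I replace $G(u)$ by $c^\alpha u^{-\alpha}(1+o(1))$ and integrate, controlling the $o(1)$ by first fixing $\varepsilon>0$, choosing $A$ so that $|G(u)u^{\alpha}-c^\alpha|<\varepsilon$ for $u\ge A$, letting $Y\to\infty$, and only then sending $\varepsilon\to0$. Everything else is routine: the double limit poses no difficulty because $G$ does not depend on $T$, so the only $T$-dependence sits in the scalar $a_T\to1$ and in the upper limit $t\tau^T\to+\infty$, and these decouple into a product of convergent factors.
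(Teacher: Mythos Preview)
Your proof is correct and follows essentially the same route as the paper. The only cosmetic difference is that the paper first changes variables $u=s\tau^T$ to rewrite the integral over $[0,t]$ and then applies dominated convergence to the integrand $(1-a_T)(\tau^T)^\alpha+(\tau^T)^\alpha\int_{s\tau^T}^{+\infty}\phi^T(x)\,dx$, whereas you keep the integral over $[0,t\tau^T]$ and invoke a Karamata-type estimate for $\int_0^Y G(u)\,du$; after the change of variable these two arguments are the same computation.
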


\noindent
The proof is given in appendix.\\

\noindent
Of course, the renormalizing constant $(1-a_T)(\tau^T)^{\alpha-1}$ is chosen again so that the renormalized impact function converges.
This power law impact function is close to the ``square root law" which is empirically observed, see \cite{moroimpact}. Let us now see how this impact exponent $\nu=1-\alpha$ can be linked to the order flow persistence exponent $\gamma$.

\subsubsection{Link between the exponents}

For a price process implied by Equation \eqref{impactvolume} and for a Hawkes order flow with kernel whose norm is close to one and whose distribution tail is of order $1/x^{1+\alpha}$, we have shown that:

\begin{itemize}
\item{The exponent of the long memory of the order flow is $\gamma=1-2\alpha$, see Theorem \ref{theo1}.}
\item{The exponent of the impact power law is $\nu=1-\alpha$, see Theorem \ref{theo3}.}
\end{itemize}

\noindent
Therefore, we have: $$\nu=(1+\gamma)/2.$$

\noindent
On the data, $\gamma$ is stock dependent and seems to vary from $0.2$ to $0.7$, see \cite{propagator}, which in our framework implies that $\nu$ must vary between $0.6$ and $0.85$, which quite fits empirical data, see \cite{moroimpact}.

\begin{remark}
Theoretically retrieving the shape of the impact function from other market variables is an important problem in market microstructure. Three main possibilities have been suggested to solve it. The first one is to consider the risk aversion of investors and market makers as it is done in \cite{gabaix2006institutional}. Another way is to consider statistical models for the flow of market orders and price changes as in \cite{bacryhawkesorder}. Here we proceed somehow as in \cite{hmsdcsd}, \cite{efficiencyimpact} or \cite{tothanomalous} by introducing a martingale condition on the price which allows us to imply the impact function only from the order flow. Let us notice that the link between the exponent of the impact function and the exponent of the long memory of the order flow that we get is the same as in \cite{bacryhawkesorder} and \cite{hmsdcsd}.
%Let us stress that the shape of the impact function is completely implied by the order flow, as in \cite{efficiencyimpact} for example, and we did not need to introduce an outside parameter to describe its concavity as it is done for example in \cite{gabaix2006institutional}. This upholds the old idea that these two phenomena  should be closely connected, see for example \cite{hmsdcsd} or \cite{tothanomalous}. Let us notice that the link between the exponent of the impact function and the exponent of the long memory of the order flow is the same as in \cite{bacryhawkesorder} and \cite{hmsdcsd}.
\end{remark}

\section{Conclusion}
\label{conclusion}

We have recalled why the permanent market impact of a metaorder should be linear. Using this linearity as well as a martingale hypothesis on the price, we have computed an impact equation (Equation \eqref{impactvolume}) that allows us to retrieve price dynamics from any order flow model. We applied it to the example of order flows modelled by nearly unstable Hawkes processes. In this model, the sign of market orders presents apparent long memory and we have shown that we can recover many stylized facts of market impact. In particular, we computed a power law impact function whose exponent is linked to the long memory of the sign of market orders.\\

\noindent
The example of the Hawkes order flow allowed us to derive simple formulas on the future expectation of the volume. However, this model might be too simplistic. It would be interesting to consider more realistic frameworks such as the one presented in \cite{fragmentation} where the order flow is really built as a superposition of independent metaorders. In addition to their natural interpretation, these order flow models manage to link the size distribution of metaorders to the long memory property of the order flow.
However, such models are in a way much more complex and getting closed form prediction formulas for the order flow imbalance is probably very intricate.\\

%In practice, given a past realization of the order flow, on might still try to numerically compute heuristic predictions on the future of the volume process. In order to build such predictors based on the model of \cite{fragmentation}, a good step might be to try to detect\footnote{Promising detection algorithms are presented in \cite{toth2010segmentation}.} the execution of the different metaorders and to use the size distribution to compute the future expected volume of each metaorder. Once one has such prediction, one only needs Equation \eqref{impactvolume} in which he replaces the expectation by his obtained heuristic prediction to compute an associated price process. It would be interesting to simulate such volume and price processes and to numerically study the implied impact function of market orders and metaorders for such dynamics.

\appendix

\section{Proofs}

\subsection{Proof of Proposition \ref{propogatheral}}

We proceed as in \cite{gatheralna}:\\

\noindent
For all $(v_1,v_2,T)>0$, we consider that on $[0,\theta T]$, we buy stock at a rate $v_1$ and on $[\theta T,T]$ sell stock at a rate $v_2$. We take $\theta=v_2/(v_1+v_2)$ so that this strategy is a round trip. It is shown in \cite{gatheralna} (Equation $(4)$) that the expectation of the cost of this strategy is:

\begin{center}
$E=v_1f(v_1)\int_0^{\theta T}\int_0^t G(t-s)dsdt+v_2f(v_2)\int_{\theta T}^T \int_{\theta T}^t G(t-s)dsdt-v_2f(v_1)\int_{\theta T}^T\int_0^{\theta T} G(t-s)dsdt$
\end{center}

\noindent
Writing $G=G_\infty+\tilde{G}$ with $\tilde{G}=G-G_\infty$:
\begin{eqnarray*}
\frac{E(v_1+v_2)^2}{T^2v_1v_2} & = & \frac{1}{2}G_\infty (f(v_1)v_2-f(v_2)v_1) \\
& + & \frac{(v_1+v_2)^2}{v_1v_2}\bigg(v_1f(v_1)\int_0^\theta\frac{1}{T}\int_0^{tT}\tilde{G}(s)dsdt+v_2f(v_2)\int_\theta^1 \frac{1}{T}\int_{0}^{T(t-\theta)}\tilde{G}(s)dsdt\\
& - &  v_2f(v_1) \int_\theta^1 \frac{1}{T}\int_{T(t-\theta)}^{Tt}\tilde{G}(s)dsdt\bigg).
\end{eqnarray*}

\noindent
Therefore, using Ces\`aro's lemma (applied to $\tilde{G}\rightarrow 0$), we show that only the first term does not tend to zero when $T$ tends to infinity and thus for $E$ not to be strictly negative when $T$ becomes large enough, we must have: $v_1f(v_2)\leq v_2f(v_1)$.\\

\noindent
Symmetrically, we get: $v_1f(v_2)= v_2f(v_1)$ therefore $f$ must be linear.

\subsection{Proof of Theorem \ref{theo1}}

Let us write:
$$\theta(x)=\int_0^{+\infty}\frac{e^{iu}}{u^x}du\text{ for }x\in]0,1[\text{ and }\theta(x)=\int_0^{+\infty}\frac{e^{iu}-1}{u^x}du\text{ for }x\in]1,2[.$$

\noindent
By definition of $X^T$, the covariance of its $h$-increments is linked to the covariance of the $Th$-increments of $N^T$ by\footnote{$C^{X,T}_{\tau,h}=\mathbb{E}[(X^T_{t+\tau+h}-X^T_{t+\tau})(X^T_{t+h}-X^T_{t})]-(\mathbb{E}[X^T_{t+h}-X^T_{t}])^2$.} $$C^{X,T}_{\tau,h}=A_T^2C^{N,T}_{T\tau,Th}.$$

\noindent
Let us now compute the Fourier transform of the correlation function of the $h$-increments of $X^T$. For a fixed $z\in \mathbb{R}^*$, we have
$$\hat{C}^{X,T}_{z,h}=\int_{-\infty}^{+\infty}C^{X,T}_{\tau,h}e^{iz\tau}d\tau=\frac{A_T^2}{T}\hat{C}^{N,T}_{\frac{z}{T},Th}.$$
Theorem 1 of \cite{bacryhawkesesti} yields

$$\hat{C}^{N,T}_{\frac{z}{T},Th}=Th \frac{\frac{\mu^T}{1-a_T} \hat{g}_{\frac{z}{T}}^{Th}}{|1-\hat{\phi}^T_\frac{z}{T}|^2}=Th \frac{\frac{\mu^T}{1-a_T} T\hat{g}_{z}^{h}}{|1-\hat{\phi}^T_\frac{z}{T}|^2}$$
with $g_{t}^u=(1-\frac{|t|}{u})^+$.\\

\noindent 
We now state the following technical lemma.
\begin{lemma}
\label{lemap}
For $z\in \mathbb{R}^*$
$$\hat{\phi}^T_\frac{z}{T}=a_T\Big[1+\theta (1+\alpha)\alpha \big(\frac{cz}{T}\big)^\alpha+ o(\frac{1}{T^\alpha})\Big].$$
\end{lemma}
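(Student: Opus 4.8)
The plan is to reduce the statement to a small-frequency expansion of the Fourier transform of $\Phi$ alone. Since $\phi^T=a_T\Phi$ and the Fourier transform is linear, we have $\hat{\phi}^T_{z/T}=a_T\hat{\Phi}(z/T)$, so it suffices to expand $\hat{\Phi}(\omega)$ as $\omega=z/T\to 0$. I would treat $z>0$ first (the case $z<0$ follows by complex conjugation, as $\Phi$ is real). Because $\int\Phi=1$ we have $\hat{\Phi}(0)=1$, and I would start from the exact identity
$$\hat{\Phi}(\omega)=1+\int_0^{+\infty}(e^{i\omega x}-1)\Phi(x)\,dx,$$
so that the whole problem becomes showing that this integral equals $\theta(1+\alpha)\alpha c^\alpha\omega^\alpha+o(\omega^\alpha)$.

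The second step is to isolate the pure tail $\Psi(x):=\alpha c^\alpha x^{-(1+\alpha)}$ and compute its contribution by scaling. The key computation is, substituting $u=\omega x$,
$$\int_0^{+\infty}(e^{i\omega x}-1)\Psi(x)\,dx=\alpha c^\alpha\omega^\alpha\int_0^{+\infty}\frac{e^{iu}-1}{u^{1+\alpha}}\,du=\theta(1+\alpha)\alpha c^\alpha\omega^\alpha,$$
the integral converging at both endpoints precisely because $0<\alpha<1$ (the integrand behaves like $iu^{-\alpha}$ near $0$ and is bounded by $2u^{-1-\alpha}$ near $+\infty$), which is where the hypothesis $\alpha\in\,]0,\tfrac12[$ enters. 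Since $c^\alpha(z/T)^\alpha=(cz/T)^\alpha$, this produces exactly the announced term.

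The third step controls the remainder. Fix $A>0$ and split the integral at $A$. On $[0,A]$ I would use $|e^{i\omega x}-1|\le\omega x$ together with the integrability of $\Phi$ (from $\int\Phi=1$) and of $x^{-\alpha}$ on $[0,A]$ to conclude that both $\int_0^A(e^{i\omega x}-1)\Phi\,dx$ and $\int_0^A(e^{i\omega x}-1)\Psi\,dx$ are $O(\omega)=o(\omega^\alpha)$. The genuine work is the tail remainder $\int_A^{+\infty}(e^{i\omega x}-1)(\Phi-\Psi)\,dx$: given $\varepsilon>0$, choose $A$ so large that $|\Phi(x)-\Psi(x)|\le\varepsilon x^{-1-\alpha}$ for $x\ge A$ (possible since $\Phi\sim\Psi$), then bound $|e^{i\omega x}-1|\le\min(2,\omega x)$ and split at the natural scale $x=1/\omega$:
$$\Big|\int_A^{+\infty}(e^{i\omega x}-1)(\Phi-\Psi)\,dx\Big|\le\varepsilon\omega\int_A^{1/\omega}x^{-\alpha}\,dx+2\varepsilon\int_{1/\omega}^{+\infty}x^{-1-\alpha}\,dx\le\Big(\tfrac{1}{1-\alpha}+\tfrac{2}{\alpha}\Big)\varepsilon\,\omega^\alpha.$$
Since $\varepsilon$ is arbitrary (with $A=A(\varepsilon)$ fixed), this piece is $o(\omega^\alpha)$. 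Assembling the three contributions yields the expansion of $\hat{\Phi}(\omega)$, and multiplying by $a_T$ gives the lemma. The main obstacle is exactly this uniform control of the tail remainder: one must upgrade the pointwise estimate $\Phi-\Psi=o(x^{-1-\alpha})$ into a uniform $o(\omega^\alpha)$ bound, and the split of $|e^{i\omega x}-1|$ at $1/\omega$ is what makes the two resulting pieces each of order $\varepsilon\,\omega^\alpha$.
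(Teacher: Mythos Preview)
Your proof is correct and follows the same core idea as the paper: both start from $\hat{\Phi}(\omega)-1=\int_0^\infty(e^{i\omega x}-1)\Phi(x)\,dx$ and extract the $\omega^\alpha$ behaviour via the scaling substitution $u=\omega x$. The difference lies only in how the passage to the limit is justified. The paper applies dominated convergence directly to the rescaled integrand, using a global bound $\Phi(x)\le C x^{-1-\alpha}$ (which in turn relies on the extra hypothesis, stated only inside the proof, that $\Phi$ is bounded). You instead do an explicit $\varepsilon$--$A$ splitting, bounding the tail remainder by $C\varepsilon\,\omega^\alpha$ and the compact part by $O(\omega)$; your route is therefore marginally more self-contained since it needs only the integrability of $\Phi$ and its tail asymptotic, not boundedness. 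One small remark: the convergence of $\theta(1+\alpha)=\int_0^\infty(e^{iu}-1)u^{-1-\alpha}\,du$ requires only $0<\alpha<1$, not $\alpha<1/2$; the stronger restriction is used elsewhere in the paper (so that $H=1/2+\alpha<1$).
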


\begin{proof}
The Fourier transform writes:
\begin{eqnarray*}
\hat{\phi}^T_\frac{z}{T}&=&\int_0^{+\infty} \phi^T(t)e^{it\frac{z}{T}}dt\\
&=&\int_0^{+\infty} \phi^T(t)dt+\int_0^{+\infty} \phi^T(t)(e^{it\frac{z}{T}}-1)dt\\
&=&a_T\big(1+T^{-\alpha}\int_0^{+\infty} \phi(\frac{Tu}{z})(e^{iu}-1)\frac{T^{1+\alpha}}{z}du\big).
\end{eqnarray*}

\noindent
However, for a fixed $u>0$, our hypothesis on the asymptotic behaviour of $\phi$ implies:$$\phi(\frac{Tu}{z})(e^{iu}-1)\frac{T^{1+\alpha}}{z}\rightarrow (e^{iu}-1) \frac{\alpha c^\alpha z^\alpha}{u^{1+\alpha}}.$$ Moreover, since $\phi$ is bounded, there exists $C>0$ such that $|\phi(x)|\leq \frac{C}{x^{1+\alpha}}$ and thus $$|\phi(\frac{Tu}{z})(e^{iu}-1)\frac{T^{1+\alpha}}{z}|\leq z^\alpha \frac{C}{u^{1+\alpha}}|e^{iu}-1|.$$ We can therefore apply the dominated convergence theorem to obtain the lemma.

\end{proof}

\noindent
Using Lemma \ref{lemap}, we get that

\begin{equation*}
\hat{C}^{N,T}_{\frac{z}{T},Th}=hT^2\frac{\frac{\mu^T}{1-a_T}}{|\theta(1+\alpha)|^2(\frac{c}{T})^{2\alpha}a_T^2 \alpha^2}\frac{\hat{g}_{z}^{h}}{|z|^{2\alpha}}\Big[1+o(1)+O\big(T^\alpha(1-a_T)\big)\Big].
\end{equation*}

\noindent
Replacing $\mu^T$ by its expression, leads to

\begin{equation*}
\displaystyle\lim_{T\rightarrow+\infty}\hat{C}^{X,T}_{z,h}=\frac{hC_\mu }{|\theta(1+\alpha)|^2c^{2\alpha} \alpha^2}\frac{\hat{g}_{z}^{h}}{|z|^{2\alpha}}.
\end{equation*}

\noindent
Let us now state a domination lemma for $\hat{C}^{X,T}_{z,h}$.
\begin{lemma}
There is a constant $C>0$ such that for all $z\in \mathbb{R}^*$
$$|\hat{C}^{X,T}_{z,h}|\leq C\min(\frac{1}{|z|^2},1)(1+\frac{1}{z^{2\alpha}}).$$
\end{lemma}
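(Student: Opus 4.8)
The plan is to start from the explicit expression for $\hat{C}^{X,T}_{z,h}$ already available in the proof. Since $\hat{C}^{X,T}_{z,h}=\frac{A_T^2}{T}\hat{C}^{N,T}_{\frac{z}{T},Th}$ and Theorem 1 of \cite{bacryhawkesesti} evaluates $\hat{C}^{N,T}_{\frac{z}{T},Th}$, substituting $A_T=T^{-2\alpha}$ and $\mu^T=C_\mu(1-a_T)T^{2\alpha-1}$ cancels all powers of $T$ but one and gives
$$\hat{C}^{X,T}_{z,h}=hC_\mu\,\hat{g}_{z}^{h}\cdot\frac{1}{T^{2\alpha}|1-\hat{\phi}^T_{\frac{z}{T}}|^2}.$$
The tent transform is $\hat{g}_{z}^{h}=\frac{4\sin^2(hz/2)}{hz^2}$, which is real and nonnegative and satisfies the elementary bound $\hat{g}_{z}^{h}\leq C_h\min(1,1/z^2)$ (it is bounded by $h$ near the origin and by $4/(hz^2)$ for large $|z|$). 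This already produces the factor $\min(1/|z|^2,1)$, so the lemma reduces to the estimate $\frac{1}{T^{2\alpha}|1-\hat{\phi}^T_{z/T}|^2}\leq C(1+1/z^{2\alpha})$, uniform in $z\in\mathbb{R}^*$ and in $T$ large.

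The core of the argument is a $T$-free lower bound on $|1-\hat{\phi}^T_{z/T}|$ obtained through its real part. Writing $\hat{\phi}^T_{z/T}=a_T\hat{\Phi}_{z/T}$ with $\hat{\Phi}_w=\int_0^\infty\Phi(t)e^{itw}\,dt$ and using $\int\Phi=1$, I decompose
$$\mathrm{Re}\big(1-\hat{\phi}^T_{z/T}\big)=(1-a_T)+a_T\int_0^\infty\Phi(t)\big(1-\cos(tz/T)\big)\,dt,$$
a sum of two nonnegative terms. Since $a_T\geq 1/2$ for $T$ large, it suffices to establish a bound of the form $\mathrm{Re}(1-\hat{\Phi}_w)=\int_0^\infty\Phi(t)(1-\cos tw)\,dt\geq c_0\min(|w|^\alpha,1)$ for all $w\neq 0$, with $c_0>0$ depending only on $\Phi$.

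This last claim is the main obstacle, and I would verify it in two regimes. Away from the origin ($|w|\geq\eta$ for any fixed $\eta>0$) the integral is continuous and strictly positive (the integrand is nonnegative and vanishes only on a discrete set, while $\Phi$ is a genuine density) and tends to $1$ as $|w|\to\infty$ by Riemann–Lebesgue; hence its infimum over $\{|w|\geq\eta\}$ is a positive constant, which dominates $\min(|w|^\alpha,1)\leq 1$. Near the origin I reuse the computation of Lemma \ref{lemap}, with $z/T$ replaced by $w$, to obtain $1-\hat{\Phi}_w\sim-\alpha c^\alpha\theta(1+\alpha)|w|^\alpha$, so $\mathrm{Re}(1-\hat{\Phi}_w)\sim-\alpha c^\alpha\,\mathrm{Re}\,\theta(1+\alpha)\,|w|^\alpha$. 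The decisive point is the sign: since $\mathrm{Re}\,\theta(1+\alpha)=\int_0^\infty\frac{\cos u-1}{u^{1+\alpha}}\,du<0$, the leading coefficient $-\alpha c^\alpha\,\mathrm{Re}\,\theta(1+\alpha)$ is strictly positive, so for $\eta$ small enough $\mathrm{Re}(1-\hat{\Phi}_w)\geq c_0|w|^\alpha$ there. I expect the delicate part to be precisely the uniformity of this near-origin estimate, namely that the neighborhood $\eta$ and the constant $c_0$ are $T$-independent, which holds because they are properties of the single fixed kernel $\Phi$.

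Finally I would assemble the pieces. The claim and the decomposition give, for $T$ large, $|1-\hat{\phi}^T_{z/T}|\geq\frac{c_0}{2}\min(|z|^\alpha/T^\alpha,1)$, hence $T^{2\alpha}|1-\hat{\phi}^T_{z/T}|^2\geq\frac{c_0^2}{4}\min(|z|^{2\alpha},T^{2\alpha})$. Using $1/\min(a,b)=\max(1/a,1/b)\leq 1/a+1/b$ together with $T^{2\alpha}\geq 1$ yields $\frac{1}{T^{2\alpha}|1-\hat{\phi}^T_{z/T}|^2}\leq\frac{4}{c_0^2}(1+1/z^{2\alpha})$, and multiplying by the bound on $hC_\mu\hat{g}_{z}^{h}$ gives the lemma with $C=4hC_\mu C_h/c_0^2$.
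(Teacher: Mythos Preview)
Your proof is correct and follows the same strategy as the paper: bound $|\hat g^h_z|$ by a constant times $\min(1,1/z^2)$, bound $|1-\hat\phi^T_{z/T}|$ from below by a constant times $\min(|z/T|^\alpha,1)$, and substitute into the formula from \cite{bacryhawkesesti}. The paper simply \emph{asserts} the second inequality and moves on; you supply an actual argument for it (real-part decomposition, positivity away from the origin via Riemann--Lebesgue, and the sign of $\mathrm{Re}\,\theta(1+\alpha)$ near the origin), which is a genuine improvement in rigor but not a different route.
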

\begin{proof}
We begin the proof by noticing that there exist $c_1>0$ and $c_2>0$ such that  $$|\hat{g}_{z}^{h}|\leq c_1\min(\frac{1}{|z|^2},1)$$ and $$|1-\hat{\phi}^T_\frac{z}{T}|\geq c_2\min(\big|\frac{z}{T}\big|^\alpha,1).$$
If we now use these inequalities together with Theorem 1 of \cite{bacryhawkesesti} and the definition of $X^T$, we obtain
$$|\hat{C}^{X,T}_{z,h}|\leq  \frac{hC_\mu T^{-2\alpha}}{c_2^2 \min(\big|\frac{z}{T}\big|^\alpha,1)^2} c_1\min(\frac{1}{|z|^2},1)$$
which ends the proof.
\end{proof}

\noindent
The Fourier transform inversion formula writes $$C^{X,T}_{t,h}=\frac{1}{2\pi}\int_{-\infty}^{+\infty}\hat{C}^{X,T}_{z,h}e^{-itz}dz.$$
Thanks to the previous lemma, we can use the dominated convergence theorem to get that for a fixed $t$, $C^{X,T}_{t,h}$ has a limit when $T$ tends to infinity and
$$C^{X}_{t,h}:=\displaystyle\lim_{T\rightarrow+\infty}C^{X,T}_{t,h}=\frac{1}{2\pi}\int_{-\infty}^{+\infty}\frac{hC_\mu }{|\theta(1+\alpha)|^2c^{2\alpha} \alpha^2}\frac{\hat{g}_{z}^{h}}{|z|^{2\alpha}}e^{-itz}dz.$$
\noindent
We now state the following lemma which is given by a simple change of variables.
\begin{lemma}
The function whose Fourier transform is $|z|^{-2\alpha}$ is $\frac{1}{2Re\big(\theta(1-2\alpha)\big)|t|^{1-2\alpha}}$.
\end{lemma}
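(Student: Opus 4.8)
The plan is to exhibit the function explicitly and verify its Fourier transform by a pure scaling argument, using the convention $\hat f(z)=\int_{-\infty}^{+\infty}f(t)e^{izt}\,dt$ fixed earlier in this proof. Set $\beta=1-2\alpha$; since $\alpha\in]0,\tfrac12[$ we have $\beta\in]0,1[$, so the first branch of the definition of $\theta$ applies to $\theta(\beta)=\theta(1-2\alpha)$. I would start from the natural candidate $g(t)=|t|^{-\beta}$, whose Fourier transform is homogeneous of degree $\beta-1=-2\alpha$, hence proportional to $|z|^{-2\alpha}$, and compute
\[
\hat g(z)=\int_{-\infty}^{+\infty}|t|^{-\beta}e^{izt}\,dt .
\]
Because $g$ is even, this equals $2\,\mathrm{Re}\int_0^{+\infty}t^{-\beta}e^{izt}\,dt$ for every real $z$, the negative half-line contributing the complex conjugate of the positive one.

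The key computation is the change of variables $u=zt$ for $z>0$, which gives
\[
\int_0^{+\infty}t^{-\beta}e^{izt}\,dt=z^{\beta-1}\int_0^{+\infty}u^{-\beta}e^{iu}\,du=z^{-2\alpha}\,\theta(1-2\alpha),
\]
using $\beta-1=-2\alpha$. Taking real parts and extending to $z<0$ by evenness yields $\hat g(z)=2|z|^{-2\alpha}\,\mathrm{Re}\bigl(\theta(1-2\alpha)\bigr)$. Dividing $g$ by the constant $2\,\mathrm{Re}(\theta(1-2\alpha))$ then produces a function whose Fourier transform is exactly $|z|^{-2\alpha}$, namely the claimed $\dfrac{1}{2\,\mathrm{Re}(\theta(1-2\alpha))\,|t|^{1-2\alpha}}$.

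Two points need care. First, the integral $\int_0^{+\infty}t^{-\beta}e^{izt}\,dt$ converges only as an oscillatory (improper) integral, and $|z|^{-2\alpha}$ is not integrable at infinity, so the identity should really be read in the sense of tempered distributions, or justified through a regularisation $t^{-\beta}e^{-\varepsilon t}$ followed by $\varepsilon\downarrow 0$; the scaling relation survives this limit, and I expect this to be the main, though essentially routine, obstacle. Second, one must check that $\mathrm{Re}(\theta(1-2\alpha))\neq 0$ for the division to be legitimate: evaluating $\theta(1-2\alpha)=\Gamma(2\alpha)e^{i\pi\alpha}$ gives $\mathrm{Re}(\theta(1-2\alpha))=\Gamma(2\alpha)\cos(\pi\alpha)>0$ on $\alpha\in]0,\tfrac12[$, so the constant is well defined and positive.
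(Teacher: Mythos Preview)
Your proof is correct and follows exactly the approach the paper intends: the paper merely states that the lemma ``is given by a simple change of variables,'' and you have carried out precisely that change of variables $u=zt$ together with the evenness argument. Your additional remarks on the distributional interpretation and on the nonvanishing of $\mathrm{Re}(\theta(1-2\alpha))$ go beyond what the paper provides but are welcome clarifications.
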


\noindent
Therefore, if we set $K=\frac{hC_\mu }{2Re\big(\theta(1-2\alpha)\big)|\theta(1+\alpha)|^2c^{2\alpha} \alpha^2}$, then the function $t\mapsto Kg_t^h\ast\frac{1}{|t|^{1-2\alpha}}$ has the same Fourier transform as $C^{X}_{t,h}$ and thus

\begin{equation}
\label{covar}
\displaystyle\lim_{T\rightarrow+\infty} C^{X,T}_{t,h}=Kg_t^h\ast\frac{1}{|t|^{1-2\alpha}}.
\end{equation}

\noindent
Computing the convolution ends the proof:
\begin{center}
$\displaystyle\lim_{T\rightarrow+\infty}C^{X,T}_{t,h}= K(|t+h|^{2H}+|t-h|^{2H}-2|t|^{2H})$ with $H=1/2+\alpha.$
\end{center}

\subsection{Proof of Proposition \ref{theo2}}

Starting from Equation \eqref{impactvolume}:
\begin{center}
$P_t=P_0+\kappa v \displaystyle\lim_{s\rightarrow+\infty}\E[N^a_s-N^b_s|\mathcal{F}_t]$.
\end{center}

\noindent
We rewrite $N^{a/b}_s=M^{a/b}_s+\int_0^s\lambda^{a/b}du$ which implies:
\begin{center}
$P_t=P_0+\kappa v \displaystyle\lim_{s\rightarrow+\infty}\E[M^a_s-M^b_s+\int_0^t \lambda^a_u-\lambda^b_udu+\int_t^s\lambda^a_u-\lambda^b_udu|\mathcal{F}_t]$.
\end{center}

\noindent
We use proposition \ref{formlambdamart} to replace $\lambda^{a/b}$ and we get:
\begin{center}
$P_t=P_0+\kappa v \displaystyle\lim_{s\rightarrow+\infty}\E[M^a_s-M^b_s+\int_0^t \lambda^a_u-\lambda^b_udu+\int_t^s \int_0^u\psi(u-x)(dM^a_x-dM^b_x)du|\mathcal{F}_t]$.
\end{center}

\noindent
Using that $M^{a/b}$ is a martingale and that:
\begin{center}
$dM_x=dN_x-\lambda_x dx=dN_x-(\mu+\int_0^t \phi(x-r)dN_r) dx,$
\end{center}
we get:
\begin{eqnarray*}
P_t&=&P_0+\kappa v [N^a_t-N^b_t+\int_t^{+\infty} \int_0^t\psi(u-x)(dN^a_x-dN^b_x)du\\
&+&\int_t^{+\infty} \int_0^t\psi(u-x)\int_0^x \phi(x-r)(dN^a_r- dN^b_r)dxdu].
\end{eqnarray*}

\noindent
Inverting the integrals in the previous equation, we get:

\begin{eqnarray*}
N^a_t-N^b_t & = & \int_0^t(dN^a_s-dN^b_s),
\end{eqnarray*}
and
\begin{eqnarray*}
\int_t^{\infty} \int_0^t\psi(u-x)(dN^a_x-dN^b_x)du & = & \int_0^t\int_t^{\infty} \psi(u-x) du (dN^a_x-dN^b_x)\\
& = & \int_0^t\int_{t-s}^{\infty}\psi(u')du' (dN^a_s-dN^b_s).
\end{eqnarray*}
Moreover
\begin{eqnarray*}
& &\int_t^{\infty} \int_0^t\psi(u-x)\int_0^x \phi(x-r)(dN^a_r- dN^b_r) dxdu \\
&=& \int_0^t\int_t^{\infty}\int_r^t \psi(u-x) \phi(x-r)dxdu (dN^a_r-dN^b_r) \\
& = & \int_0^t \int_{t-s}^{\infty}\int_0^{t-s} \psi(u'-x') \phi(x')dx'du' (dN^a_s-dN^b_s),
\end{eqnarray*}
where we have made the changes of variables: $u'=u-r$ and $x'=x-r$.

\noindent
Replacing those three terms in the previous equation ends the proof:
$$P_t=P_0+\kappa v \Big[\int_0^t (1+\int_{t-s}^{+\infty}\psi(u)du -\int_{t-s}^{+\infty}\int_0^{t-s} \psi(u-x) \phi(x)dxdu )(dN^a_s-dN^b_s)\Big].$$

\subsection{Proof of Theorem \ref{theo3}}

We start from the impact equation \eqref{metaorderimpact}, that we apply to our asymptotic:

\begin{eqnarray*}
RMI^T(t)&=&F \frac{1-a_T}{(\tau^T)^{1-\alpha}} \int_0^{t\tau^T} \zeta^T(s) ds\\
&=& F \frac{1-a_T}{(\tau^T)^{1-\alpha}} \int_0^t \frac{\kappa v \tau^T}{(1-a_T)(\tau^T)^\alpha} \Big[(1-a_T)(\tau^T)^\alpha+\int_{s\tau^T}^{+\infty} \phi^T(x)dx(\tau^T)^\alpha\Big]ds\\
&\rightarrow & \kappa vF \int_0^t\frac{c^\alpha}{s^{1-\alpha}}ds=\frac{c^\alpha\kappa v \Phi}{1-\alpha} t^{1-\alpha}.
\end{eqnarray*}

\noindent
Using the dominated convergence theorem with
$(1-a_T)(\tau^T)^\alpha\rightarrow 0$
and $\int_{s\tau^T}^{+\infty} \phi^T(x)dx\sim c^\alpha/\big((\tau^Ts)^\alpha\big)$
ends the proof.

\bibliographystyle{abbrv}
\bibliography{bibli}

\end{document}